\newcommand{\Comment}[1]{\relax}
\newcommand{\Hide}[1]{}
\newif
\newif
\newtheorem{theorem}{Theorem}
\newtheorem{lemma}[theorem]{Lemma}
\newtheorem{proposition}[theorem]{Proposition}
\newtheorem{restxxx}[theorem]{Restriction}
\newtheorem{agreexxx}[theorem]{Agreement}
\newtheorem{termxxx}[theorem]{Terminology}
\newtheorem{notxxx}[theorem]{Notation}
\newtheorem{assumxxx}[theorem]{Assumption}
\newtheorem{convenxxx}[theorem]{Convention}
\newtheorem{exaxxx}[theorem]{Example}
\newtheorem{exexxx}[theorem]{Exercise}
\newtheorem{remxxx}[theorem]{Remark}
\newtheorem{openxxx}[theorem]{Open Problem}
\newtheorem{conjxxx}[theorem]{Conjecture}
\newtheorem{defxxx}[theorem]{Definition}
\newenvironment{definition}[1]{\begin{defxxx}[\emph{#1}]\rm}%
{\hfill\QED\end{defxxx}}
\newtheorem{procxxx}[theorem]{Procedure}
{\hfill\QED\end{procxxx}}
\newtheorem{Prxxx}[theorem]{Proof}
{\end{Prxxx}} 
\newcommand{\Set}[1]{\{ #1 \}}
\newcommand{\SET}[1]{\bigl\{ #1 \bigr\}}
\newcommand{\A}{{\cal A}}
\newcommand{\B}{{\cal B}}
\newcommand{\X}{{\cal X}}
\newcommand{\bigOO}[1]{{\cal O}(#1)} 
\newcommand{\Let}[3]%
    {\textbf{\textsf{let}}\ {#1}\,{#2}\ \textbf{\textsf{in}}\;{#3}\,}
\newcommand{\Try}[3]%
    {\textbf{\textsf{try}}\ {#1} {#2}\ \textbf{\textsf{in}}\;{#3}\;}
\newcommand{\Mix}[3]%
    {\textbf{\textsf{mix}}\ {#1} {#2}\ \textbf{\textsf{in}}\;{#3}\;}
\newcommand{\LET}[3]%
    {\textbf{\textsf{let}}^{\bm{*}}\ {#1} {#2}\ \textbf{\textsf{in}}\;{#3}\;}
\newcommand{\Letrec}[3]%
    {\textbf{\textsf{letrec}}\ {#1} {#2}\ \textbf{\textsf{in}}\;{#3}\;}
\newcommand{\bridges}[2]{{\partial}_{#1}(#2)} 
\newcommand{\degreeSym}{{\text{\em degree}}} 
\newcommand{\degr}[2]{{\degreeSym}_{#1}(#2)}
\newcommand{\heightSym}{{\text{\em height}}} 
\newcommand{\height}[2]{{\heightSym}_{#1}(#2)}
\newcommand{\ie}{\textit{i.e.}}
\newcommand{\QED}{{\Large $\square$}} 
\newcommand{\size}[1]{|\,#1\,|}
\newcommand{\set}[1]{\overline{#1}}
\newcommand{\power}[1]{\mathscr{P}(#1)}
\newcommand{\minbisect}{\text{\rm MinBisection}} 
\newcommand{\cliquecov}{\text{\rm CliqueCover}} 
\newcommand{\spacing}[2]{
  \renewcommand{\baselinestretch}{#2}
  \small\normalsize #1
  \setlength{\parskip}{0.1\baselineskip}
  \settowidth{\parindent}{xxxx}
  \setlength{\parindent}{#2\parindent}
  \setlength{\leftmargini}{\parindent}
  \setlength{\leftmarginii}{\parindent}
  \setlength{\leftmarginiii}{\parindent}
  \setlength{\footnotesep}{#2\footnotesep}
}
\begin{document}

\spacing{\normalsize}{0.98}
\setcounter{page}{1}     
\setcounter{tocdepth}{1} 
\ifTR
  \pagenumbering{roman} 
\else
\fi

\title{Efficient Reassembling of Graphs, Part 2: The Balanced Case}

\author{Saber Mirzaei %
           \thanks{Partially supported by NSF awards CCF-0820138
           and CNS-1135722.} \\
        Boston University \\
        \ifTR Boston, Massachusetts \\
        \href{mailto:kfoury@bu.edu}{kfoury{@}bu.edu}
        \else \fi
\and
       Assaf Kfoury %
          \footnotemark[1]\\
       Boston University  \\
        \ifTR Boston, Massachusetts \\
        \href{mailto:smirzaei@bu.edu}{smirzaei{@}bu.edu}
        \else \fi
}

\ifTR
   \date{\today}
\else
   \date{} %
\fi
\maketitle
  \ifTR
     \thispagestyle{empty} 
  \else
  \fi

\vspace{-.3in}
  \begin{abstract}

\noindent
The \emph{reassembling of a simple connected graph} $G = (V,E)$ with
$n = \size{V} \geqslant 1$ vertices is an
abstraction of a problem arising in earlier studies of network
analysis. The reassembling process has a simple formulation
(there are several equivalent formulations) relative to a binary tree $\B$ 
-- its so-called \emph{reassembling tree}, with root node at the top 
and $n$ leaf nodes at the bottom -- where every cross-section corresponds 
to a partition of $V$ (a block in the partition is a node in the 
cross-section) such that:
\begin{itemize}[itemsep=0pt,parsep=2pt,topsep=5pt,partopsep=0pt] 
\item the bottom (or first) cross-section (\ie, all the leaves) is the 
      finest partition of $V$ with $n$ one-vertex blocks,
\item the top (or last) cross-section (\ie, the root) is the coarsest partition
      with a single block, the entire set $V$,
\item a node (or block) in an intermediate cross-section (or partition)
      is the result of merging its two children nodes (or blocks) in the
      cross-section (or partition) below it.
\end{itemize}
The \emph{edge-boundary degree} of a block $A$ of vertices
is the number of edges with one endpoint in $A$ and one endpoint
in $(V-A)$. The \textbf{maximum} edge-boundary degree encountered during 
the reassembling process is what we call the 
$\bm{\alpha}$\textbf{-measure} of the reassembling, and
the \textbf{sum} of all edge-boundary degrees is its
$\bm{\beta}$\textbf{-measure}. The $\alpha$-optimization
(resp. $\beta$-optimization) of the reassembling of $G$ is to determine a
reassembling tree $\B$ that minimizes its $\alpha$-measure 
(resp. $\beta$-measure).

\medskip 
\noindent
There are different forms of reassembling, depending on the shape
of the reassembling tree $\B$. In an earlier report, we studied
\emph{linear reassembling}, which is the case when the height of
$\B$ is $(n-1)$. In this report, we study \emph{balanced reassembling},
when $\B$ has height $\lceil \log n\rceil$. In a forthcoming report,
we study \emph{general reassembling}, which is the case when the height
of $\B$ can be any number between $(n-1)$ and $\lceil \log n\rceil$. 

\medskip 
\noindent
The two main results in this report are the NP-hardness of
$\alpha$-optimization and $\beta$-optimization of balanced
reassembling. The first result is obtained by a sequence of
polynomial-time reductions from \emph{minimum bisection of graphs}
(known to be NP-hard), and the second by a sequence of polynomial-time
reductions from \emph{clique cover of graphs} (known to be NP-hard).


  \end{abstract}

\ifTR
    \newpage
    \tableofcontents
    \newpage
    \pagenumbering{arabic}
\else
    \vspace{-.2in}
\fi

\section{Introduction}
\label{sect:intro}

A more extensive introduction to the problem of graph reassembling, including
the motivation for studying it, is in our earlier report~\cite{kfouryReassembling}. We here limit
ourselves to a brief review of the \emph{balanced} case of the problem.

\paragraph{Problem Statement.}
Let $G = (V,E)$ be a simple (no self-loops, no multi-edges),
connected, undirected graph, with $\size{V} = n\geqslant 1$ vertices
and $\size{E} = m$ edges.  The \emph{reassembling} of
$G$ can be defined relative to a binary tree $\B$
-- one root node at the top, $n$ leaf nodes at the bottom --
where every cross-section corresponds to a partition of $V$
(a block in the partition is a node in the cross-section) such that:
\begin{itemize}[itemsep=0pt,parsep=2pt,topsep=5pt,partopsep=0pt]
\item the bottom (or first) cross-section (\ie, all the leaves) is the
      finest partition of $V$ with $n$ one-vertex blocks,
\item the top (or last) cross-section (\ie, the root) is the coarsest partition
      with a single block, the entire set $V$,
\item a node (or block) in an intermediate cross-section (or partition)
      is the result of merging its two children nodes (or blocks) in the
      cross-section (or partition) below it.
\end{itemize}
For convenience, we say \emph{vertices} to refer to the vertices of $G$
and \emph{nodes} to refer to those of the tree $\B$. We call $G$ the
\emph{input graph} and $\B$ the \emph{reassembling tree}.

The height of $\B$ is at least $\lceil\log n\rceil$ and at most $(n-1)$.
We say the reassembling of $G$ according to $\B$ is \emph{balanced} if
$\B$'s height is $\lceil\log n\rceil$; this is the case when, at every
level of the reassembling, there is a maximum number of pairs of blocks
which are each merged into a block at the next level up.

If $A$ and $B$ are disjoint nonempty sets of vertices in $G$,
a \emph{bridge} between $A$ and $B$ is an edge whose two endpoints are
one in $A$ and one in $B$.  We write $\bridges{}{A,B}$ to denote the
set of all bridges between $A$ and $B$. In case $B = V-A$, the set
$\bridges{}{A,B}$ is the same as the edge cut-set determined by the
cut $(A,V-A)$, and instead of writing $\bridges{}{A,V-A}$ we write
$\bridges{}{A}$ for simplicity. The \emph{edge-boundary degree} of $A$
is the number of bridges with only one endpoint in $A$, \ie,
$\size{\bridges{}{A}}$.

Various optimization problems can be associated with graph
reassembling. Two such optimizations are the following, identified
by the letters $\alpha$ and $\beta$ throughout. We want to
determine a reassembling tree $\B$ for which:
\begin{itemize}[itemsep=0pt,parsep=2pt,topsep=5pt,partopsep=0pt]
\item[($\alpha$)]
    the \textbf{maximum} edge-boundary degree encountered during
    reassembling is minimized, or
\item[($\beta$)]
    the \textbf{sum} of all edge-boundary degrees encountered during
    reassembling is minimized.
\end{itemize}
Initially, before we start reassembling, we always set the $\alpha$-measure
$M_{\alpha}$ to the \textbf{maximum} of all the vertex degrees, \ie,
$\max\,\Set{\degr{}{v}\,|\,v\in V}$,
and we set the $\beta$-measure $M_{\beta}$
to the \textbf{sum} of the vertex degrees, \ie,
$\sum\,\Set{\degr{}{v}\,|\,v\in V}$.
During reassembling, after we merge disjoint nonempty subsets $A$
and $B$, we update the $\alpha$-measure $M_{\alpha}$ to:
$\max\,\SET{M_{\alpha},\size{\bridges{}{A\cup B}}}$,
and the $\beta$-measure $M_{\beta}$ to:
$\bigl(M_{\beta} + \size{\bridges{}{A\cup B}}\bigr)$. The reassembling process
terminates when we reach the root of the reassembling tree $\B$.

\paragraph{About Terminology.}
Our choice of names comes from applications we studied in earlier
report~\cite{kfouryReassembling}, where we had to disassemble and reassemble flow
networks in such a way that the resulting $\alpha$-measure and
$\beta$-measure were minimized. However, \emph{graph reassembling} as
here defined can also be viewed as belonging to the large family
of \emph{graph embedding} problems, whereby a graph $G$ (often called
a \emph{source graph} in the graph-theoretic literature) is embedded
into another graph $G'$ (often called a \emph{host graph}) in such a
way that various optimization measures (motivated by applications) are
minimized or maximized.

For our reassembling problem, a source graph is what we call
an \emph{input graph} $G = (V,E)$, and a host graph is a rooted binary
tree, which we call a \emph{reassembling tree} $\B$. Moreover, the
embedding in our problem takes a special form (namely, as explained
above, there is a one-one correspondence between the vertices of $G$
and the leaves of $\B$, the full set $V$ is mapped to the root of
$\B$ and etc.).

We stick to our terminology, in part to be consistent with our earlier
reports, where we study or use graph reassembling. But we also avoid
in this report the use of concepts and terminology that are incidental
to our graph-theoretic examination.

\newcommand{\CliqueCover}[1]{#1\text{-CliqeCover}}

\paragraph{Main Results.}
We restrict attention to the balanced case of graph reassembling in
this report.  We prove that $\alpha$-optimization and
$\beta$-optimization of balanced reassembling are both NP-hard
problems. We obtain these results by showing that:
\begin{itemize}[itemsep=0pt,parsep=2pt,topsep=5pt,partopsep=0pt]
\item
   there is a sequence of several polynomial-time reductions from
   \emph{minimum bisection of graphs} ($\minbisect$) to
   $\alpha$-optimization of balanced reassembling of graphs,
\item
   there is a sequence of several polynomial-time reductions
   from \emph{clique cover of graphs} ($\cliquecov$)
   to $\beta$-optimization of balanced
   reassembling of graphs.
\end{itemize}
Both $\minbisect$ and $\cliquecov$ have been extensively studied: They
are both NP-hard in general~\cite{garey1976some, karp1972reducibility}.
This leaves open the problem of identifying
classes of graphs, whether of practical or theoretical significance,
for which there are low-degree polynomial-time solutions for
our two optimization problems.

\paragraph{Organization of the Report.}

In Section~\ref{sect:problem} we give precise formal definitions of
several notions underlying our examination of the balanced case.
Section~\ref{sect:alpha-optimization} presents the NP-hardness of
$\alpha$-optimization, and Section~\ref{sect:beta-optimization}
the NP-hardness of $\beta$-optimization, for the balanced
case of graph reassembling. We believe some of the intermediate
reductions in Section~\ref{sect:alpha-optimization} and
Section~\ref{sect:beta-optimization} are of independent
interest. Section~\ref{sect:future} is a short wrap-up of related and
future work.

\section{Notational Conventions and Preliminary Definitions}
\label{sect:problem}

We agree on notations and conventions for this report, and reproduce
enough from the earlier~\cite{kfouryReassembling} to make the present one self-contained.

All graphs are simple (no self-loops, no multi-edges)
and undirected. With no loss of generality, we assume that all graphs
have each an even number of vertices.

We denote the set of vertices of a graph $G$ by
$V(G)$, and its set of edges by $E(G)$. If there is an edge
connecting two vertices $v,w\in V(G)$, we write $\set{v\,w}$ for
the two-element set representing that edge. Unless explicitly
stated otherwise, we reserve the letter ``$n$'' for the number
of vertices and the letter ``$m$'' for the number of edges.

For convenience, we may refer to the graph $G$ by writing $G=(V,E)$
instead of $G=\bigl(V(G),E(G)\bigr)$, and also refer to the sizes of
$V(G)$ and $E(G)$ by writing $\size{V}$ and $\size{E}$ instead of
$\size{V(G)}$ and $\size{E(G)}$.

A graph $H$ is a \emph{subgraph} of the graph $G$ iff $V(H)\subseteq
V(G)$ and $E(H)\subseteq E(G)$. In this report we only need to
consider subgraphs that are each induced by a subset of vertices.  The
subgraph $H$ of $G$ is said to be \emph{induced} by a subset $A\subseteq V(G)$
iff $V(H) = A$ and $E(H) = \Set{\,\set{v\,w}\in E(G)\,|\,v,w\in A\,}$.
We write $G[A]$ to denote the subgraph of $G$ induced by the subset
$A$ of vertices

We write ${\bridges{G}{A}}$ or ${\bridges{}{G[A]}}$ or, if $G$ is clear
from the context, just ${\bridges{}{A}}$ to denote
the \emph{edge-boundary} of the subgraph $G[A]$, \ie,
${\bridges{}{A}} =
\Set{\,\set{v\,w}\in E(G)\,|\,v\in A\text{ and } w\not\in A\,}$.

More generally, if $A$ and $B$ are disjoint subsets of $V(G)$,
the set of \emph{bridges} between $A$ and $B$
is denoted by ${\bridges{G}{A,B}}$ or, if $G$ is clear from the context,
just ${\bridges{}{A,B}}$ which is the set
$\Set{\,\set{v\,w}\in E(G)\,|\,v\in A\text{ and } w\in B\,}$.
Thus, ${\bridges{}{A}}$ is the same as ${\bridges{}{A,V-A}}$.

\newcommand{\graphs}[1]{{\textsc{graphs}}(#1)}

We write $\graphs{2^{\bm{*}}}$ to refer to the class of all simple
graphs which have each a power-of-$2$ number of vertices.

\begin{definition}{Minimum Bisection Problem}
\label{def:min-bisection}
A \emph{bisection} of the graph $G = (V,E)$ is a partition
$\Set{A,B}$ of $V(G)$ with two blocks of equal size:
\[
   \size{A} = \size{B},\quad A\cap B = \varnothing, \quad
   \text{and}\ \ A\cup B = V .
\]
(Our running assumption is that $V$ has an even number of vertices and can
therefore be partitioned into two equal-size blocks.)
We say the bisection $\Set{A,B}$
is of \emph{type} $(X,Y)$ iff $X,Y \subseteq V(G)$ such that:
\begin{itemize}[itemsep=0pt,parsep=0pt,topsep=2pt,partopsep=0pt]
\item
  either $ X\cap Y = \varnothing, \quad
   X \subseteq A, \quad \text{and}\ \ Y \subseteq B $,
\item
  or \quad\;  $ X\cap Y = \varnothing, \quad
   X \subseteq B, \quad \text{and}\ \ Y \subseteq A $ .
\end{itemize}
A \emph{minimum bisection} $\Set{A,B}$ of $G$ is one that minimizes
the set of bridges ${\bridges{}{A,B}}$ between $A$ and $B$, \ie,
\[
   \size{\bridges{}{A,B}}\ \leqslant
   \ \min\,\SET{\,\size{\bridges{}{A',B'}}\;\bigl|
              \;\text{$\Set{A',B'}$ is a bisection of $G$}\,}.
\]
The \emph{minimum bisection problem}, $\minbisect$, asks for
finding a minimum bisection of a given input graph $G$.
$\minbisect$ is known to be NP-hard~\cite{garey1976some}.

We write $\minbisect(2^{\bm{*}})$ for the restriction of $\minbisect$ to the
class $\graphs{2^{\bm{*}}}$.  Lemma~\ref{lem:NP-hardness-of-bisection}
asserts that $\minbisect(2^{\bm{*}})$ is also NP-hard.
\end{definition}

\newcommand{\colorability}[1]{#1\text{-Colorability}}

\begin{definition}{Clique Cover Problem}
\label{def:clique-cover}
Given an integer $k\geqslant 1$, a $k$-\emph{clique cover} of the graph
$G = (V,E)$ is a partition of $V(G)$ into $k$ disjoint
subsets $\Set{V_1,\ldots,V_k}$ such that each of the induced
subgraphs $G[V_1],\ldots,G[V_k]$ is a complete graph  (or a clique in $G$).

The $k$-\emph{clique cover problem}, $\CliqueCover{k}$ in shorthand,
is a (yes,no) question that asks whether a graph $G$ has a $k$-clique cover.
$\CliqueCover{k}$ is polynomial-time solvable for $k = 1,2$, and
is known to be NP-complete for every $k\geqslant 3$~\cite{karp1972reducibility}.
\end{definition}

Our notion of a \emph{reassembling} of graph $G =(V,E)$ presupposes
the notion of a \emph{binary tree} over the finite set
$V(G) = \Set{v_1,\ldots,v_n}$.  Our definition of \emph{binary trees} is
not standard, but is more convenient for our purposes.%
   \footnote{A standard definition of a binary tree $T$ makes
   $T$ a subset of the set of finite binary strings $\Set{0,1}^*$ such that:
\begin{itemize}[itemsep=0pt,parsep=0pt,topsep=2pt,partopsep=0pt]
\item $T$ is prefix-closed, \ie, if $t\in T$ and $u$ is a prefix of $t$,
      then $u\in T$.
\item Every node has two children, \ie, $t\,0\in T$ iff $t\,1\in T$
      for every $t\in \Set{0,1}^*$.
\end{itemize}
The \emph{root node} of $T$ is the empty string $\varepsilon$,
and a \emph{leaf node} of $T$ is a string $t\in T$ without children, \ie,
both $t\,0\not\in T$ and $t\,1\not\in T$.
}

\begin{definition}{Binary trees}
\label{defn:binaryTrees}
An \emph{(unordered) binary tree} $\B$ over $V(G)=\Set{v_1,\ldots,v_n}$
is a collection of non-empty subsets of $V(G)$ satisfying three
conditions:
\begin{enumerate}[itemsep=0pt,parsep=2pt,topsep=5pt,partopsep=0pt]
\item For every $v\in V(G)$, the singleton set $\Set{v}$ is in $\B$.
\item The full set $V(G)$ is in $\B$.
\item For every $X\in\B$, there is a unique $Y\in\B$ such that:
      $X\cap Y=\varnothing$ and $(X\cup Y)\in\B$.
\end{enumerate}
The \emph{leaf nodes} of $\B$ are the singleton sets $\Set{v}$, and
the \emph{root node} of $\B$ is the full set $V(G)$. Depending on the
context, we may refer to the members of $\B$ as its \emph{nodes} or as
its \emph{clusters}. Several expected properties of $\B$, reproducing
familiar ones of a standard definition, are stated in the next two
propositions.
\end{definition}

Proofs for Propositions~\ref{prop:propertiesOne} and~\ref{prop:propertiesTwo}
are straightforward and here omitted. Details can be found in the
earlier report~\cite{kfouryReassembling}.

\begin{proposition}[Properties of binary trees]
\label{prop:propertiesOne}
Let $\B$ be a binary tree as in Definition~\ref{defn:binaryTrees},
let $v\in V(G)$, and let:
\begin{equation*}
\label{eq:path}
\tag{$\dag$}
    \Set{v} = X_0\ \subsetneq\ X_1 \ \subsetneq X_2
                  \ \subsetneq\ \cdots\ \subsetneq\ X_p = V(G)
\end{equation*}
be a maximal sequence of nested clusters from $\B$. We then have:
\begin{enumerate}[itemsep=0pt,parsep=2pt,topsep=5pt,partopsep=0pt]
\item The sequence in~\eqref{eq:path} is uniquely defined, \ie,
      every maximal nested sequence starting from $\Set{v}$ is the same.
\item For every cluster $Y\in\B$, if $\Set{v}\subseteq Y$, then
      $Y\in\Set{X_0,\ldots,X_p}$.
\item There are pairwise disjoint clusters
      $\Set{Y_0,\ldots,Y_{p-1}} \subseteq \B$ such that, for every
      $0 \leqslant i < p$:
\[    X_i \cap Y_{i} = \varnothing \quad\text{and}\quad
      X_i \cup Y_{i} = X_{i+1}.
\]
\end{enumerate}
\end{proposition}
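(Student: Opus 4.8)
\begin{sketch}
The plan is to extract from Definition~\ref{defn:binaryTrees} the single structural fact that drives everything --- that $\B$ is a \emph{laminar family}, i.e.\ any two clusters $X,Z\in\B$ are either nested ($X\subseteq Z$ or $Z\subseteq X$) or disjoint --- and then read off the three items almost mechanically. Condition~3 assigns to every non-root cluster $X$ a \emph{parent} $\mathsf{par}(X):=X\cup Y$, where $Y$ is the unique sibling of $X$ ($X\cap Y=\varnothing$ and $X\cup Y\in\B$); since $\size{\mathsf{par}(X)}>\size{X}$, iterating $\mathsf{par}$ from any cluster yields a strictly increasing chain that can only terminate at $V(G)$. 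This already produces a maximal nested chain~\eqref{eq:path} from $\Set{v}$; the real content of the proposition is its uniqueness and the accompanying structure.

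The hard part is laminarity, and I would prove it by induction on $n=\size{V(G)}$. The case $n=1$ is trivial. For $n\geqslant 2$, the parent chain from an arbitrary singleton exhibits clusters $A,B\in\B$ with $A\cap B=\varnothing$ and $A\cup B=V(G)$; the crux is to show, using the \emph{uniqueness} clause of Condition~3 (which is exactly what forbids two ``crossing'' clusters), that every cluster other than $V(G)$ is contained in $A$ or in $B$. Granting this, $\B=\Set{V(G)}\cup\B_A\cup\B_B$ with $\B_A=\Set{X\in\B\,:\,X\subseteq A}$ and $\B_B$ defined dually; one checks $\B_A,\B_B$ are binary trees over $A,B$, applies the induction hypothesis to each, and concludes laminarity of $\B$ since $A\cap B=\varnothing$. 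I expect this reduction to be the point needing genuine care: besides ruling out straddling clusters, one must verify that the sibling of a cluster lying inside $A$ again lies inside $A$, which is itself a small laminarity-flavored argument.

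Granting laminarity, item~2 is immediate: if $\Set{v}\subseteq Y\in\B$ then $Y$ meets every $X_i$ (all contain $v$), hence is comparable to each; $Y$ cannot lie below $X_0=\Set{v}$ nor above $X_p=V(G)$, and were it strictly between consecutive $X_i$ and $X_{i+1}$ it could be inserted into~\eqref{eq:path}, contradicting maximality --- so $Y\in\Set{X_0,\ldots,X_p}$. Item~1 then follows: any maximal nested chain from $\Set{v}$ consists of clusters containing $v$, hence by item~2 is a sub-chain of $\Set{X_0,\ldots,X_p}$, and being maximal it equals it.

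For item~3, let $Y_i$ be the unique sibling of $X_i$ for $0\leqslant i<p$, so $X_i\cap Y_i=\varnothing$ and $X_i\cup Y_i=\mathsf{par}(X_i)\in\B$. Since $\mathsf{par}(X_i)$ contains $v$, item~2 gives $\mathsf{par}(X_i)=X_j$ for some $j$, and $j>i$ because $Y_i\neq\varnothing$. A short laminarity argument shows $\mathsf{par}(X_i)$ is the \emph{smallest} cluster properly containing $X_i$: any cluster $Z\supsetneq X_i$ must have $Y_i\subseteq Z$ (the remaining cases contradict $X_i\subsetneq Z$), so $\mathsf{par}(X_i)\subseteq Z$. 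Hence $j=i+1$, that is $X_i\cup Y_i=X_{i+1}$. Finally the $Y_i$ are pairwise disjoint since $Y_i=X_{i+1}\setminus X_i$ and the $X_i$ strictly increase: for $i<j$, $Y_i\subseteq X_{i+1}\subseteq X_j$ while $Y_j\cap X_j=\varnothing$.
\end{sketch}
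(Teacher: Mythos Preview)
Your sketch is correct and considerably more detailed than anything in the paper: the paper does not actually prove this proposition, declaring it ``straightforward and here omitted'' with a pointer to the earlier report~\cite{kfouryReassembling}. There is therefore no paper-proof to compare against in any meaningful sense.

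That said, a structural remark: the key lemma you isolate --- laminarity of $\B$ --- is precisely part~1 of the \emph{next} proposition in the paper (Proposition~\ref{prop:propertiesTwo}), also stated without proof. Your choice to establish laminarity first and then derive items~1--3 from it is a clean organizing principle, and arguably the natural order (the paper's ordering, stating the path properties before laminarity, is a bit awkward since the clean arguments for items~1--3 all seem to want laminarity). Your inductive proof of laminarity via the root-splitting $\B=\Set{V(G)}\cup\B_A\cup\B_B$ is the standard one; the step you flag as ``needing genuine care'' (that no cluster straddles $A$ and $B$) does indeed require the uniqueness clause in Condition~3 in an essential way, and it would be worth writing that step out explicitly. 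One small point in your item~3 argument: the parenthetical ``the remaining cases contradict $X_i\subsetneq Z$'' is slightly terse --- ruling out $Y_i\cap Z=\varnothing$ requires a second application of laminarity, this time between $\mathsf{par}(X_i)$ and $Z$, not just between $Y_i$ and $Z$.
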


\noindent
Based on this proposition, we use the following terminology:
\begin{itemize}[itemsep=0pt,parsep=2pt,topsep=5pt,partopsep=0pt]
\item We call the sequence in~\eqref{eq:path}, which is unique by part 1,
      the \emph{path} from leaf node $\Set{v}$ to root node $V(G)$.
\item Every cluster containing $v$ is one of the nodes along this unique
      path, according to part 2.
\item In part 3, $Y_i$ is the \emph{sibling node} of
      $X_i$, and both are the \emph{children nodes} of the
      \emph{parent node} $X_{i+1}$.
\end{itemize}

\begin{proposition}[Properties of binary trees]
\label{prop:propertiesTwo}
Let $\B$ be a binary tree as in Definition~\ref{defn:binaryTrees}.
We then have:
\begin{enumerate}[itemsep=0pt,parsep=2pt,topsep=5pt,partopsep=0pt]
\item For all clusters $X,Y\in\B$,
      if $X\cap Y\neq\varnothing$ then $X\subseteq Y$
      or $Y\subseteq X$.
\item For every cluster $X\in\B$, the sub-collection
      of clusters ${\B}_{X} := \Set{\,Y\in\B\;|\;Y\subseteq X\,}$ is a
      binary tree over $X$, with root node $X$.
\item $\B$ is a collection of $(2n-1)$ nodes/clusters.
\end{enumerate}
\end{proposition}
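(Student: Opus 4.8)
The plan is to establish the three parts in the order stated, using Proposition~\ref{prop:propertiesOne} throughout as the main tool and reserving an induction on $n$ for the node count in part~3. First, for part~1: suppose $X\cap Y\neq\varnothing$ and fix $v\in X\cap Y$. Let $\Set{v}=X_0\subsetneq X_1\subsetneq\cdots\subsetneq X_p=V(G)$ be the unique path from $\Set{v}$ to the root supplied by Proposition~\ref{prop:propertiesOne}(1). Since $\Set{v}\subseteq X$ and $\Set{v}\subseteq Y$, Proposition~\ref{prop:propertiesOne}(2) gives $X=X_i$ and $Y=X_j$ for some indices $i,j$; because the $X_k$ form a $\subseteq$-chain, $X$ and $Y$ are comparable, which is exactly $X\subseteq Y$ or $Y\subseteq X$.

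For part~2, I would check that $\B_X$ satisfies the three conditions of Definition~\ref{defn:binaryTrees} over the ground set $X$ (the case $X=V(G)$ being trivial, so assume $X\subsetneq V(G)$). Conditions~1 and~2 are immediate: every singleton $\Set{v}$ with $v\in X$ lies in $\B$ and is contained in $X$, and $X\in\B$ with $X\subseteq X$ serves as the root. For condition~3, let $Z\in\B_X$ with $Z\neq X$, so $Z\subsetneq V(G)$, and let $W\in\B$ be the sibling of $Z$ guaranteed inside $\B$, with $Z\cap W=\varnothing$ and $Z\cup W\in\B$. Picking $v\in Z$ and writing the path from $\Set{v}$ as in part~1, Proposition~\ref{prop:propertiesOne}(2,3) identifies $Z=X_i$ (with $i<p$), then $Y_i$ is also a sibling of $Z$ and so $W=Y_i$ and $Z\cup W=X_{i+1}$ by the uniqueness in condition~3; since $Z\subsetneq X$ and $v\in X$, the same proposition forces $X=X_\ell$ with $\ell\geq i+1$, whence $W\subseteq Z\cup W=X_{i+1}\subseteq X$. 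Thus $W$ and $Z\cup W$ lie in $\B_X$, and uniqueness of $W$ within $\B_X$ is inherited from its uniqueness within $\B\supseteq\B_X$, so $\B_X$ is a binary tree over $X$ with root $X$.

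For part~3, I would induct on $n$; the base $n=1$ gives $\B=\Set{\,\Set{v_1}\,}$ with $\size{\B}=1=2n-1$. For $n\geq 2$ let $X,Y$ be the two children of the root $V(G)$ (they exist, by applying Proposition~\ref{prop:propertiesOne}(3) to the top of any path), so $X\cap Y=\varnothing$, $X\cup Y=V(G)$, and by part~2 both $\B_X$ and $\B_Y$ are binary trees, over $X$ and $Y$ respectively. The key claim is the disjoint decomposition $\B=\Set{V(G)}\sqcup\B_X\sqcup\B_Y$: we have $\B_X\cap\B_Y=\varnothing$ because no non-empty cluster can be contained in $X\cap Y=\varnothing$; and every $Z\in\B$ with $Z\neq V(G)$ meets $X$ or $Y$, say $Z\cap X\neq\varnothing$, so part~1 gives $Z\subseteq X$ or $X\subseteq Z$, and in the second case — since the parent of $X$ is $V(G)$ (uniqueness in condition~3, as $X\cup Y=V(G)\in\B$), so the only clusters lying weakly above $X$ are $X$ and $V(G)$ — we must have $Z=X$; either way $Z\subseteq X$, i.e.\ $Z\in\B_X$. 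Given the decomposition, $\size{\B}=1+\size{\B_X}+\size{\B_Y}=1+(2\size{X}-1)+(2\size{Y}-1)=2n-1$ by the induction hypothesis applied to $\size{X},\size{Y}<n$.

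The only step needing genuine care is the last one — excluding a cluster lying strictly between a child of the root and the root — and it is handled precisely by the uniqueness of paths and of siblings in Proposition~\ref{prop:propertiesOne}; everything else is bookkeeping. If one prefers to avoid the induction, an equally short route to part~3 is to note (using parts~1 and~3 of Proposition~\ref{prop:propertiesOne}) that the parent map turns $\B$ into a rooted tree whose leaves are exactly the $n$ singletons (the $\subseteq$-minimal clusters, by part~1) and in which every non-leaf cluster has exactly two children, so $\size{\B}=n+(n-1)=2n-1$ by the usual full-binary-tree count; I would keep whichever version is cleaner to write out in full.
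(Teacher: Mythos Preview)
Your proof is correct. The paper itself omits the argument entirely, stating only that the proofs of Propositions~\ref{prop:propertiesOne} and~\ref{prop:propertiesTwo} are ``straightforward and here omitted'' with details deferred to the earlier report~\cite{kfouryReassembling}; your write-up supplies exactly the kind of routine verification the authors had in mind, leaning on Proposition~\ref{prop:propertiesOne} in the expected way, so there is nothing to compare.
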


Let $\theta$ be a map from $V(G)$ to $V(G)$. We extend $\theta$
to a map on every subset $X\subseteq V(G)$ by defining:
$\theta(X) := \Set{\,\theta(v)\;|\;v\in X}$, and on
every set of subsets $\X \subseteq\power{V(G)}$ by defining
$\theta(\X) := \Set{\,\theta(X)\;|\; X\in\X}$.
If $\theta$ is a bijection on $V(G)$, then
$\theta$ is also a bijection on $\power{V(G)}$.

\begin{proposition}[Properties of binary trees]
\label{prop:propertiesThree}
Let $\B$ be a binary tree as in Definition~\ref{defn:binaryTrees}.
If $\theta: V(G)\to V(G)$ is a bijection on $V(G)$, then:
\begin{enumerate}[itemsep=0pt,parsep=2pt,topsep=5pt,partopsep=0pt]
   \item $\theta(\B)$ is a binary tree, \ie, the collection
         of subsets in $\theta(\B)$ satisfies the three conditions
         in Definition~\ref{defn:binaryTrees}.
   \item $\B$ and $\theta(\B)$ are isomorphic trees, \ie,
     \begin{itemize}[itemsep=0pt,parsep=2pt,topsep=2pt,partopsep=0pt]
       \item $\theta$ maps the leaf nodes and the root node of $\B$ to the
             leaf nodes and the root node of $\theta(\B)$.
       \item $\theta$ maps a path
    $\Set{v} = X_0 \subsetneq X_1 \subsetneq \cdots\subsetneq X_p = V(G)$
    in $\B$ to a path \\
    $\Set{\theta(v)} = \theta(X_0) \subsetneq
     \theta(X_1) \subsetneq \cdots
     \subsetneq\theta(X_p) = V(G)$ in $\theta(\B)$.
       \item $\theta$ maps a pair of sibling nodes $X$ and $Y$ in $\B$
           to a pair of sibling nodes $\theta(X)$ and $\theta(Y)$
           in $\theta(\B)$.
     \end{itemize}
\end{enumerate}
In words, every bijection on $V(G)$ lifts to an isomorphism between
binary trees on $V(G)$.
\end{proposition}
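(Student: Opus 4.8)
The plan is to first isolate the elementary fact that a bijection $\theta$ on $V(G)$ acts on $\power{V(G)}$ as an isomorphism of Boolean algebras: it fixes $\varnothing$ and $V(G)$, commutes with union and intersection, and --- because it is injective --- satisfies $X \subsetneq Y$ iff $\theta(X) \subsetneq \theta(Y)$; moreover $\theta^{-1}$ is again a bijection on $V(G)$ and enjoys the same properties. Only these observations, together with Definition~\ref{defn:binaryTrees} and Proposition~\ref{prop:propertiesOne}, will be used. There is no genuinely new combinatorial content here, so I expect no serious obstacle; the one place to stay alert is that the uniqueness clause of condition~(3) and the maximality of paths both have to be pulled back through $\theta^{-1}$.

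For part~1, I would check the three conditions of Definition~\ref{defn:binaryTrees} for $\theta(\B)$ one at a time. Condition~(1): for $v \in V(G)$ the singleton $\Set{\theta^{-1}(v)}$ lies in $\B$, so $\theta(\Set{\theta^{-1}(v)}) = \Set{v}$ lies in $\theta(\B)$. Condition~(2): $\theta(V(G)) = V(G) \in \theta(\B)$ since $\theta$ is onto. Condition~(3): an arbitrary member of $\theta(\B)$ is $\theta(X)$ for some $X \in \B$; letting $Y \in \B$ be the unique sibling of $X$ furnished by condition~(3) for $\B$, the set $\theta(Y)$ is disjoint from $\theta(X)$ and $\theta(X) \cup \theta(Y) = \theta(X \cup Y) \in \theta(\B)$. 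For uniqueness, suppose $\theta(Z) \in \theta(\B)$ also satisfies $\theta(X) \cap \theta(Z) = \varnothing$ and $\theta(X) \cup \theta(Z) \in \theta(\B)$; applying $\theta^{-1}$ (which carries $\theta(\B)$ onto $\B$ and respects the Boolean operations) gives $X \cap Z = \varnothing$ and $X \cup Z \in \B$, so $Z = Y$ by uniqueness in $\B$, hence $\theta(Z) = \theta(Y)$.

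For part~2, I would exploit that $\theta$ is an order isomorphism of $\power{V(G)}$. It maps singletons to singletons and $V(G)$ to $V(G)$, hence leaf nodes to leaf nodes and the root to the root; applying the same reasoning to $\theta^{-1}$ shows every leaf node and the root of $\theta(\B)$ arises in this way. Given a path $\Set{v} = X_0 \subsetneq X_1 \subsetneq \cdots \subsetneq X_p = V(G)$ in $\B$, its image $\Set{\theta(v)} = \theta(X_0) \subsetneq \theta(X_1) \subsetneq \cdots \subsetneq \theta(X_p) = V(G)$ is a nested chain of clusters of $\theta(\B)$ running from the leaf $\Set{\theta(v)}$ to the root; it is maximal, since a cluster of $\theta(\B)$ inserted strictly between two consecutive terms would, under $\theta^{-1}$, yield a cluster of $\B$ strictly between $X_i$ and $X_{i+1}$, contradicting maximality of the given path. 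By part~1 of Proposition~\ref{prop:propertiesOne}, this image is then exactly the path from $\Set{\theta(v)}$ to $V(G)$ in $\theta(\B)$. Finally, if $X$ and $Y$ are siblings in $\B$ with parent $X \cup Y$, then $\theta(X) \cap \theta(Y) = \varnothing$ and $\theta(X) \cup \theta(Y) = \theta(X \cup Y) \in \theta(\B)$, so $\theta(X)$ and $\theta(Y)$ are siblings in $\theta(\B)$ with parent $\theta(X \cup Y)$. Together these facts give the claimed tree isomorphism and hence the concluding remark of the proposition.
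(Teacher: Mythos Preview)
Your proposal is correct and is precisely the routine verification the paper has in mind: the paper's own proof consists of the single line ``Straightforward from Definition~\ref{defn:binaryTrees}'', so you have simply spelled out what the authors left implicit. There is no alternative approach to compare against.
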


\begin{proof}
Straightforward from Definition~\ref{defn:binaryTrees}.
\end{proof}

A common measure for a standard definition of binary trees is $\heightSym$,
which is represented in the following, corresponding to our notion of binary trees in
Definition~\ref{defn:binaryTrees}:
\begin{alignat*}{5}
    &\height{}{\B}\ &&:=
    \ && \max\,\bigl\{\,p\;\bigl|\;&& \text{there is $v\in V(G)$ such that }
\\
    & && && &&
      \Set{v} = X_0 \subsetneq X_1 \subsetneq\cdots\subsetneq X_p = V(G)
      \text{ is a maximal sequence of nested clusters}\,\bigr\} .
\end{alignat*}
For a particular node/cluster $X \in\B$, the subtree of $\B$ rooted at
$X$ is ${\B}_X$, by part 2 in Proposition~\ref{prop:propertiesTwo}.
The height of $X$ in $\B$ is therefore
$\height{\B}{X} := \height{}{{\B}_X}$.

The binary tree $\B$ over $\Set{v_1,\ldots,v_n}$
is \emph{balanced} if $\height{}{\B} = \lceil\log n\rceil$. If $n$
is a power of $2$, say $n = 2^p$, then  $\height{}{\B} = p$.

\begin{definition}{Graph Reassembling}
\label{defn:binary-reassembling}
\label{defn:graph-reassembling}
A \emph{reassembling of the graph $G = (V,E)$} is simply
defined by a pair $(G,\B)$ where $\B$ is a binary tree over $V(G)$, as in
Definition~\ref{defn:binaryTrees}.

Given a reassembling $(G,\B)$ of $G$, two measures are of particular
interest for our analysis, namely, for every node/cluster $X\in\B$, the
\emph{edge-boundary degree} (or simply the \emph{degree})
of $X$ and the \emph{height} of $X$ in $\B$:
\[
  \degr{G,\B}{X}\ :=\ \size{\bridges{G}{X}}
\quad\text{and}\quad
  \height{G,\B}{X}\ :=\ \height{\B}{X} .
\]
If the context makes clear the reassembling $(G,\B)$ -- respectively,
the binary tree $\B$ -- relative
to which of these measures are defined, we write $\degr{}{X}$ and
$\height{}{X}$ -- or $\degr{G}{X}$ and $\height{G}{X}$, respectively --
instead of $\degr{G,\B}{X}$ and $\height{G,\B}{X}$.%
     \footnote{ Our reassembling of $G$ can be viewed
        as ``hierarchical clustering'' of $G$, similar to a method of
        analysis in data mining, though used for a different purpose. Our
        reassembling mimicks so-called ``agglomerative, or
        bottom-up, hierarchical clustering'' in data mining.}

We say the reassembling $(G,\B)$ is \emph{balanced} if the underlying
binary tree $\B$ is balanced.
\end{definition}

\begin{definition}{Isomorphic Graph Reassemblings}
\label{defn:isomorphic-graph-reassemblings}
Let $(G,\B)$ and $(G,{\B}')$ be two reassemblings of the
same graph $G$. We say $(G,\B)$ and $(G,{\B}')$
are \emph{isomorphic reassemblings} if there is a
bijection $\theta : V(G)\to V(G)$
satisfying two conditions:
\begin{enumerate}[itemsep=0pt,parsep=2pt,topsep=5pt,partopsep=0pt]
   \item  ${\B}' = \theta(\B)$.
   \item  For every node $X\in\B$, we have
          $\degr{G,\B}{X} = \degr{G,{\B}'}{\theta(X)}$.
\end{enumerate}
In words, for $(G,\B)$ and $(G,{\B}')$ to be isomorphic,
not only do we require that the reassembling trees $\B$ and $\B'$
be isomorphic (Proposition~\ref{prop:propertiesThree}), but
also that the degrees of every node $X$ and its image $\theta(X)$ be equal.
\end{definition}

The following definition repeats a definition
in Section~\ref{sect:intro} more formally.

\begin{definition}{Measures on the reassembling of a graph}
\label{def:different-measures}
Let $(G,\B)$ be a reassembling of $G$.
We define the measures $\alpha$ and $\beta$
on $(G,\B)$ as follows:
\begin{alignat*}{10}
    & \alpha(G,\B) &&:=
     \ &&\max\, &&\SET{\,\degr{G,\B}{X}\; \bigl| \; X\in\B \,},
\\[1.2ex]
    & \beta(G,\B) &&:=
     \ &&\sum\, &&\SET{\,\degr{G,\B}{X}\; \bigl| \; X\in\B \,}.
\end{alignat*}
An optimization problem arises with the minimization of each of these
measures. We say the reassembling $(G,\B)$ is \emph{$\alpha$-optimal} iff:
\[
  \alpha (G,\B)\ =\ \min\,
  \SET{\,\alpha (G,{\B}')\;\bigl|
  \;{\B}'\text{ is a binary tree over $V(G)$}\,} .
\]
We say $(G,\B)$ is an \emph{$\alpha$-optimal balanced reassembling} iff:
\[
  \alpha (G,\B)\ = \ \min\,
  \SET{\,\alpha (G,{\B}')\;\bigl|
  \;{\B}'\text{ is a balanced binary tree over $V(G)$}\,}.
\]
\textbf{Important note:} When we say ``$(G,\B)$ is an $\alpha$-optimal
balanced reassembling,'' we mean $(G,\B)$ is $\alpha$-optimal
among all \emph{balanced} reassemblings only, we
do not mean that $(G,\B)$ is $\alpha$-optimal among all reassemblings
and that $(G,\B)$ happens to be balanced.

We leave to the reader the obvious similar
definition of what it means for $(G,\B)$ to
be a \emph{$\beta$-optimal} balanced reassembling.
\end{definition}


\section{$\alpha$-Optimization of Balanced Reassembling Is NP-Hard}
\label{sect:alpha-optimization}

Consider an arbitrary graph $G = (V,E)$.  Let $p$ be the smallest
integer such that $2^{p}\geqslant\size{V} = n \geqslant 2$. It follows
that $n + r = 2^{p}$ for some even integer $r$ such that $0\leqslant r
< n$. Another way of identifying $r$ is to say it is the
smallest even integer such $n+r$ is a power of $2$.
Our standing assumption is that $n$ is even.

\newcommand{\GG}{\mathbb{G}}

Let $q = (n/2) +r$. Following standard notation,
$K_{q}$ denotes the complete graph over
$q$ vertices. For Lemma~\ref{lem:preliminary}, we
construct an augmented graph $\GG$ consisting of the original
$G$ together with two disjoint copies of $K_{q}$, which we denote by
the separate letters $H$ and $I$ for clarity. More precisely,
\begin{alignat*}{8}
  & V(\GG)\ &&=\ && V(G)\ && \uplus\ && V(H)\ && \uplus\ && V(I),
\\[.8ex]
  & E(\GG)\ &&=\ && E(G) && \uplus && E(H) && \uplus && E(I)\ \uplus
   \  \Set{\,\set{v\,w}\:|\; v\in V(G)\text{ and } w\in V(H)\cup V(I)\,},
\end{alignat*}
where we write ``$\uplus$'' for ``disjoint union''. We thus assemble
the new $\GG$ by connecting every vertex in $G$ with all the vertices
in $H$ and $I$. There are no edges between $H$ and $I$.
There is a total of $n+q+q = 2n + 2r = 2^{p+1}$ vertices in $\GG$.
Thus, the graph $\GG$ has between $2n$ and $3n$ vertices and
is a member of the class $\graphs{2^{\bm{*}}}$.

\begin{lemma}
\label{lem:preliminary}
Consider the graph $\GG$ as defined in the preceding paragraph.
Every minimum bisection of $\GG$ must be of type
$\bigl(V(H),V(I)\bigr)$ -- see Definition~\ref{def:min-bisection} --
schematically shown in Figure~\ref{fig:preliminary}.
\end{lemma}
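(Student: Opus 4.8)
The plan is to show that any bisection of $\GG$ that is *not* of type $\bigl(V(H),V(I)\bigr)$ has strictly more bridges than the bisection $\Set{V(H)\cup(\text{half of }V(G)),\,V(I)\cup(\text{other half of }V(G))}$... wait, let me recount. We have $\size{V(\GG)} = 2n+2r = 2q+n$, and $\size{V(H)} = \size{V(I)} = q = (n/2)+r$. So $\size{V(H)} + \size{V(I)} = n + 2r$, which is less than $2n+2r$ unless $n=0$; the two cliques together have $n+2r$ vertices, leaving $n$ vertices of $G$ to be distributed. A bisection needs two blocks each of size $n+r = q + n/2$. So the natural candidate bisection splits $V(\GG)$ as $A = V(H)\cup S$, $B = V(I)\cup(V(G)\setminus S)$ for any $S\subseteq V(G)$ with $\size{S}=n/2$; this is precisely a bisection of type $\bigl(V(H),V(I)\bigr)$. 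First I would compute $\size{\bridges{\GG}{A,B}}$ for such a candidate: every edge of $\GG$ with one endpoint in $V(H)$ and the other in $V(G)$ that landed in $B$ is a bridge, similarly for $V(I)$ and $S$, plus bridges internal to $E(G)$; since there are no $H$–$I$ edges, and each $G$-vertex is joined to all of $V(H)\cup V(I)$, this count is $\size{S}\cdot q + (n-\size{S})\cdot q + (\text{a bounded }G\text{-term}) = nq + O(m)$, which is $\Theta(nq)$.

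**The main estimate.** The hard part — and the crux of the argument — is to show that any bisection *violating* the type-$(V(H),V(I))$ condition incurs a penalty of order $q^2$, which dominates $nq + O(m)$ once $q$ is large enough; but since $q = (n/2)+r = \Theta(n)$ and the $G$-term is $O(m)\le O(n^2)$, I must be careful that the $q^2$ gain genuinely beats an $O(n^2)$ term. Concretely, suppose the bisection $\Set{A,B}$ is not of type $\bigl(V(H),V(I)\bigr)$; by Definition~\ref{def:min-bisection} this means neither ``$V(H)\subseteq A$ and $V(I)\subseteq B$'' nor the symmetric statement holds, i.e.\ at least one of the two cliques is genuinely split across $A$ and $B$. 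If $H$ is split with $a := \size{V(H)\cap A}$ and $q-a := \size{V(H)\cap B}$ with $1\le a\le q-1$, then since $H=K_q$ is complete, the edges of $E(H)$ alone contribute $a(q-a)\ge q-1$ bridges. This is only linear, not quadratic, so one split clique is not by itself enough — I would instead argue that *forcing the sizes to balance* requires a large split. The key combinatorial point: $V(H)\cup V(I)$ has $n+2r$ vertices and each block has $n+r$ vertices; if one block, say $A$, contains $a_H$ vertices of $H$ and $a_I$ vertices of $I$, then it contains $n+r-a_H-a_I$ vertices of $G$, forcing $0\le a_H+a_I\le n+r$ and $a_H+a_I\ge r$ (since $B$ holds at most $n$ vertices of $G$). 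The bridge count from clique edges is $a_H(q-a_H)+a_I(q-a_I)$, and one shows this is minimized over the feasible region exactly when $\Set{a_H,a_I}=\Set{0,q}$ or $\Set{a_H,a_I}=\Set{q,0}$ — i.e.\ precisely the type-$(V(H),V(I))$ configurations — by convexity of $x\mapsto x(q-x)$ and the observation that pushing toward a split strictly increases each term.

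**Finishing.** So the structure of the proof is: (i) establish the feasibility constraints on $(a_H,a_I)$ from the size-balance requirement; (ii) show the clique-edge contribution $a_H(q-a_H)+a_I(q-a_I)$ plus the $G$-to-$(H,I)$ contribution, as a function of $(a_H,a_I)$ and of the partition of $V(G)$, is strictly minimized only at the type-$(V(H),V(I))$ configurations; (iii) in particular conclude that every *minimum* bisection must be of that type. For step (ii) I would separate the total bridge count into three parts — $E(H)$-internal, $E(I)$-internal, and the complete bipartite $G$–$(H\cup I)$ part plus $E(G)$-internal — and note that at a type-$(V(H),V(I))$ bisection the first two parts vanish entirely, whereas any deviation makes at least one of them at least $q-1$ while the bipartite part changes by at most $O(n)$ per swapped vertex; a careful accounting (swapping a $G$-vertex for a clique-vertex to rebalance trades an $O(n)$ change in the bipartite term against the $\Theta(q\cdot(\#\text{split}))$ clique penalty) shows the minimum is strictly attained only when no clique is split. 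I expect the main obstacle to be making the ``deviation costs $\Theta(q^2)$'' bound precise and uniform: one must handle the regime where a clique is split only slightly (cost $\sim q$) together with the consequent forced imbalance elsewhere, so the clean statement is really ``the sum over both cliques of $a(q-a)$-type penalties is zero iff type-$(V(H),V(I))$, and positive otherwise, and the $G$-dependent remainder cannot compensate a positive clique penalty because it is itself minimized compatibly.'' Once that monotonicity/convexity lemma is in hand, the conclusion of Lemma~\ref{lem:preliminary} is immediate.
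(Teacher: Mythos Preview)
Your route is genuinely different from the paper's. The paper never counts globally: it takes an arbitrary bisection $\{A,B\}$ that is not of type $(V(H),V(I))$, distinguishes two cases (both cliques split across the cut, or only one clique split), and in each case performs an explicit swap of equally many vertices between the two sides --- $H$-vertices traded for $I$-vertices in the first case, $G$-vertices traded for $I$-vertices in the second --- verifying by direct edge-bookkeeping that the cut strictly decreases. Hence no non-type bisection can be a minimum. Your global parametrization by $(a_H,a_I,X)$ is a legitimate alternative and in fact yields the pleasant closed form
\[
\size{\bridges{\GG}{A,B}} \;=\; a_H(q-a_H)\;+\;a_I(q-a_I)\;+\;2\Bigl(\size{X}-\tfrac{n}{2}\Bigr)^{2}\;+\;nq\;+\;\size{\bridges{G}{X}},
\]
where $X=A\cap V(G)$; the first three terms constitute a nonnegative ``penalty'' that vanishes exactly at type-$(V(H),V(I))$ bisections.

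The gap is in your last step. Once $\size{X}\ne n/2$, the term $\size{\bridges{G}{X}}$ can fall \emph{below} the minimum-bisection value $C^{*}$ of $G$, and your phrase ``it is itself minimized compatibly'' is an assertion, not an argument. Your proposed accounting (``$O(n)$ change per swapped vertex'') is too coarse if read as the degree bound $n-1$: moving $d=\bigl|\,\size{X}-n/2\,\bigr|$ vertices gives only $C^{*}-\size{\bridges{G}{X}}\leqslant d(n-1)$, whereas the penalty is at least $d(q+d)$, and $q+d>n-1$ fails for small $d$ when $r=0$. The correct comparison is not with the \emph{optimal} type bisection but with the type bisection whose $G$-half is $X'=X\cup W$ for \emph{any} $W\subseteq V(G)\setminus X$ of size $d$ (taking $\size{X}\leqslant n/2$; the other case is symmetric). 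Then
\[
\size{\bridges{G}{X'}}-\size{\bridges{G}{X}}\ \leqslant\ \size{W}\cdot\size{(V(G)\setminus X)\setminus W}\ =\ d\cdot\tfrac{n}{2},
\]
and the penalty minus this bound is $d(q+d)-d\cdot(n/2)=d(r+d)\geqslant 1$. With this estimate in hand your approach goes through and is arguably cleaner than the paper's two-case exchange; but as written the proposal stops precisely where the real difficulty begins.
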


\begin{proof}
We consider a bisection $\Set{A,B}$ of
$\GG$ which is \emph{not} of type $\bigl(V(H),V(I)\bigr)$, and then
show that it cannot be a minimum bisection.

Because $\Set{A,B}$
is a bisection, we have $\size{A} = \size{B} = n+r$. And
because $\size{V(H)} = \size{V(I)} = (n/2)+r$, it is never the case
that $A \cap (V (H) \uplus V (I)) = \varnothing$ or
$B \cap (V (H) \uplus V (I)) = \varnothing$. Hence,
there are two possible cases, one shown in
Figure~\ref{fig:preliminary1} and one in
Figure~\ref{fig:preliminary2}. In Figure~\ref{fig:preliminary1}, the
vertices of both $H$ and $I$ appear on both sides of the bisection.
In Figure~\ref{fig:preliminary2}, the vertices of $H$ are all on the
same side of the bisection, while the vertices of $I$ appear on both
sides of the bisection.  In these figures, $\Set{X,X',Y,Y',Z,Z'}$ is a
$6$-block partition of $V(\GG)$, defined by:
\begin{alignat*}{10}
  & X\ && =\ && A\cap V(G)\quad && \text{and} \quad && X'\ && =\ && B\cap V(G)
   \qquad && \text{(the vertices of subgraph $G$)},
\\
  & Y\ && =\ && A\cap V(H)\quad && \text{and} \quad && Y'\ && =\ && B\cap V(H)
   \qquad && \text{(the vertices of subgraph $H$)},
\\
  & Z\ && =\ && A\cap V(I)\quad && \text{and} \quad && Z'\ && =\ && B\cap V(I)
   \qquad && \text{(the vertices of subgraph $I$)}.
\end{alignat*}
In each of the two cases shown in  Figures~\ref{fig:preliminary1}
and~\ref{fig:preliminary2}, we need to prove that
$\size{\bridges{}{A,B}}$ can be decreased by moving an appropriate
number of vertices of the subgraphs $H$ and $I$ from one side of the
bisection to the other side.

\paragraph{Case 1.} This is the case in Figure~\ref{fig:preliminary1}.
With no loss of generality, suppose:
\[
  \size{X} \ \geqslant \ \size{X'} .
\]
Because $\size{A} = \size{B}$, this forces the inequality:
\[
  \size{Y} + \size{Z} \ \leqslant
  \ \size{Y'} + \size{Z'} .
\]
Because $\size{Y}+\size{Y'} =
\size{Z}+\size{Z'}$, this in turn forces one or both
of the following inequalities:
\begin{equation*}\tag{$\dag$}   
  \size{Y} \leqslant \size{Z'}
  \quad\text{or}\quad
  \size{Z} \leqslant \size{Y'}  .
\end{equation*}
With no loss of generality, assume the first inequality
$\size{Y} \leqslant \size{Z'}$ in ($\dag$) holds.

\Hide{
If both inequalities in ($\dag$)
hold, we can assume with no loss of generality the following inequality holds:
\begin{equation*}\tag{$\ddag$}   
  \size{Z'} - \size{Y}\ \leqslant\ \size{Y'} - \size{Z} .
\end{equation*}
If only one inequality in ($\dag$) holds, say,
$\size{Y} \leqslant \size{Z'}$, then again the inequality in ($\ddag$)
must hold, otherwise we would have
$\size{Z'} - \size{Y} > \size{Y'} - \size{Z}$ or equivalently
$\size{Z} + \size{Z'} > \size{Y} + \size{Y'}$, which is a
contradiction.
}

Let $\size{Y} = k \geqslant 1$.  Select an
arbitrary subset $U \subseteq Z'$ such that $\size{U} = k$.
We define a new bisection $\Set{\tilde{A},\tilde{B}}$ of $\GG$
by moving: (1) all the vertices of $Y$ from the $A$-side to the $B$-side,
and (2) all the vertices of $U$ from the $B$-side to the $A$-side.
Specifically, let:
\[
  \tilde{A}\ =\ (A-Y)\cup U\quad\text{and}\quad
  \tilde{B}\ =\ (B-U)\cup Y .
\]
The resulting set of edges connecting $\tilde{A}$ and $\tilde{B}$ is:
\begin{alignat*}{6}
 & \bridges{}{\tilde{A},\tilde{B}}\ =
 \ && \Bigl(\,\bridges{}{A,B}\ -\ \Bigl( \bridges{}{Y,X'}
                     \,\cup\, \bridges{}{Y,Y'}
                     \,\cup\, \bridges{}{U,X}
                     \,\cup\, \bridges{}{U,Z}
                     \Bigr) \Bigr)
\\
 & && \bigcup\ \ \Bigl(\, \bridges{}{Y,X}
              \,\cup\,\bridges{}{U,X'}
              \,\cup\,\bridges{}{U,Z'-U}\, \Bigr) .
\end{alignat*}
Because
$\size{\bridges{}{U,X}} = \size{\bridges{}{Y,X}}$
and $\size{\bridges{}{U,X'}} = \size{\bridges{}{Y,X'}}$,
it follows:
\[
  \size{\bridges{}{\tilde{A},\tilde{B}}}\ =
  \ \size{\bridges{}{A,B}}
  \ - \ \size{\bridges{}{Y,Y'}}
  \ - \ \size{\bridges{}{U,Z}}
  \ + \ \size{\bridges{}{U,Z'-U}}
\]
With the fact that $\size{Y} = \size{U}$, the following
inequality must hold:
\[
   \size{\bridges{}{Y,Y'}}\ \geqslant\ \size{\bridges{}{U,Z'-U}}
\]
otherwise, if it did not, we would have that $\size{Y'} < \size{Z'-U}
= \size{Z'}-\size{U} = \size{Z'}-\size{Y}$, in turn implying that
$\size{Y}+\size{Y'} < \size{Z'}$, which is a contradiction. Hence,
\[
  \size{\bridges{}{\tilde{A},\tilde{B}}}\ \geqslant
  \ \size{\bridges{}{A,B}} \ - \ \size{\bridges{}{U,Z}}
  \ >\ \size{\bridges{}{A,B}}.
\]
We conclude that
$\size{\bridges{}{\tilde{A},\tilde{B}}} < \size{\bridges{}{A,B}}$.

\paragraph{Case 2.}
This is the case in Figure~\ref{fig:preliminary2}.
It cannot be that $\size{X} < \size{X'}$, because if it
were so, it would imply $\size{X} < (n/2)$ which, with the
fact that $\size{Z\cup Z'} = (n/2) + r$, would in turn imply that
$\size{A} < n+r$, thus contradicting the hypothesis that
$\Set{A,B}$ is a bisection of $\GG$. Hence, it must be that:
\[
  \size{X} \ \geqslant \ \size{X'} \ \geqslant \ n/2 .
\]
The largest possible size of $X$, which is $n-1$, corresponds to the
smallest possible size of $Z$ which, because $\size{X}+\size{Z} = n+r$,
must therefore be $r+1$. Corresponding to the smallest possible
size of $Z$ is the largest possible size of $Z'$, which is therefore
$\size{V(I)} - (r+1) = (n/2) - 1$. Hence, it is always the case that
$\size{Z'} < \size{X}$.

We now proceed in a way similar to \textbf{Case 1}.  Let
$\size{Z'} = k \geqslant 1$. Select an arbitrary subset $U\subseteq X$
such that $\size{U} = k$ and $\bridges{}{U,X'}\neq\varnothing$.
The new bisection $\Set{\tilde{A},\tilde{B}}$ of $\GG$ is obtained
by moving: (1) all the vertices of $Z'$ from the $B$-side to the $A$-side,
and (2) all the vertices of $U$ from the $A$-side to the $B$-side.
Specifically, let:
\[
  \tilde{A}\ =\ (A-U)\cup Z'\quad\text{and}\quad
  \tilde{B}\ =\ (B-Z')\cup U .
\]
The resulting set of edges connecting $\tilde{A}$ and $\tilde{B}$ is:
\begin{alignat*}{6}
 & \bridges{}{\tilde{A},\tilde{B}}\ =
 \ \ && \Bigl(\,\bridges{}{A,B}\ -\ \Bigl( \bridges{}{Z',Z}
                     \,\cup\, \bridges{}{Z',X-U}
                     \,\cup\, \bridges{}{Y',U}
                     \,\cup\, \bridges{}{X',U}
                     \Bigr) \Bigr)
\\
 & && \bigcup\ \ \Bigl(\, \bridges{}{Z',X'}
              \,\cup\,\bridges{}{Z',U}
              \,\cup\,\bridges{}{Z,U}
              \,\cup\,\bridges{}{X-U,U}\, \Bigr) .
\end{alignat*}
Because $\size{\bridges{}{Y',U}} = \size{\bridges{}{Z\cup Z',U}}
          = \size{\bridges{}{Z,U}} + \size{\bridges{}{Z',U}}$, it follows that:
\begin{alignat*}{6}
 & \size{\bridges{}{\tilde{A},\tilde{B}}}\ =
 \ \ && \size{\bridges{}{A,B}}\ -\ \size{\bridges{}{Z',Z}}
                     \,-\, \size{\bridges{}{Z',X-U}}
                     \,-\, \size{\bridges{}{X',U}}
\\
 & && +\, \size{\bridges{}{Z',X'}}\, +\, \size{\bridges{}{X-U,U}} .
\end{alignat*}
Because $\size{\bridges{}{Z',X-U}} \geqslant \size{\bridges{}{X-U,U}}$, we
obtain the inequality:
\begin{alignat*}{6}
 & \size{\bridges{}{\tilde{A},\tilde{B}}}\ \leqslant
 \ \ && \size{\bridges{}{A,B}}\ -\ \size{\bridges{}{Z',Z}}
                     \,-\, \size{\bridges{}{X',U}}
                     \,+\, \size{\bridges{}{Z',X'}} .
\end{alignat*}
Because $\size{\bridges{}{Z',Z}} \geqslant \size{\bridges{}{Z',X'}}$,
it follows that:
\begin{alignat*}{6}
 & \size{\bridges{}{\tilde{A},\tilde{B}}}\ \leqslant
 \ \ && \size{\bridges{}{A,B}}\
                     \,-\, \size{\bridges{}{X',U}} .
\end{alignat*}
Because $\size{\bridges{}{X',U}}\neq 0$, we conclude
that $\size{\bridges{}{\tilde{A},\tilde{B}}} < \size{\bridges{}{A,B}}$.
\end{proof}
\begin{figure}
    \centering
        \begin{subfigure}[b]{0.25\textwidth}
                \includegraphics[scale=.5]
                {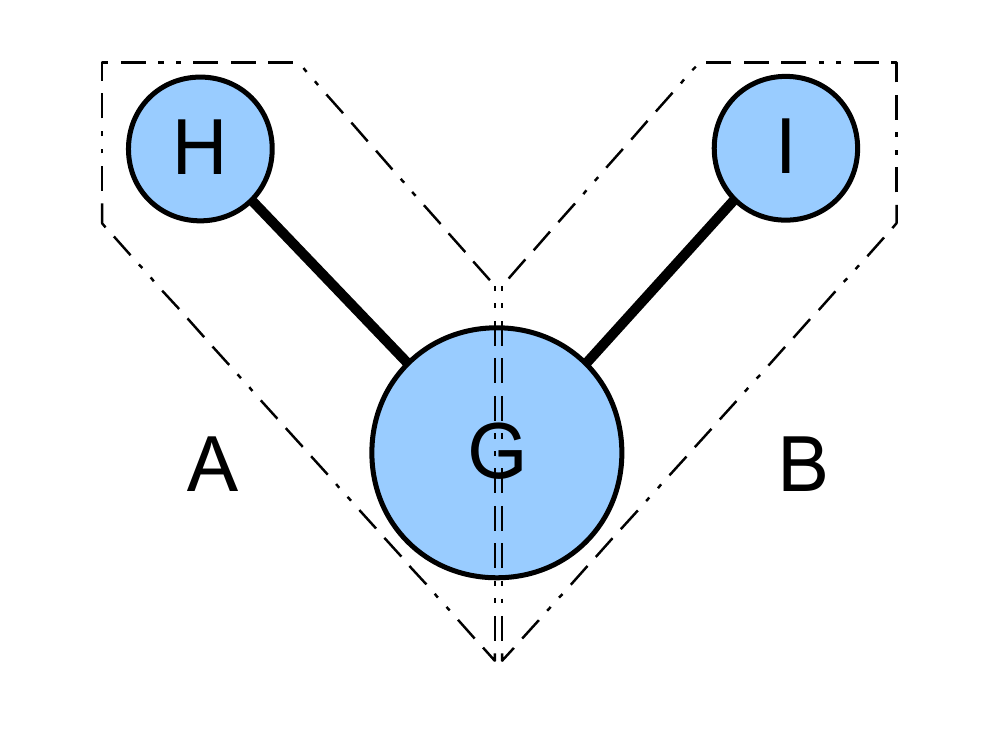}
                \caption{Graph $G$ augmented with two complete graph $H$ and $I$.}
                \label{fig:preliminary}
        \end{subfigure}
        \qquad
        \begin{subfigure}[b]{0.25\textwidth}
                \includegraphics[scale=.55]
                {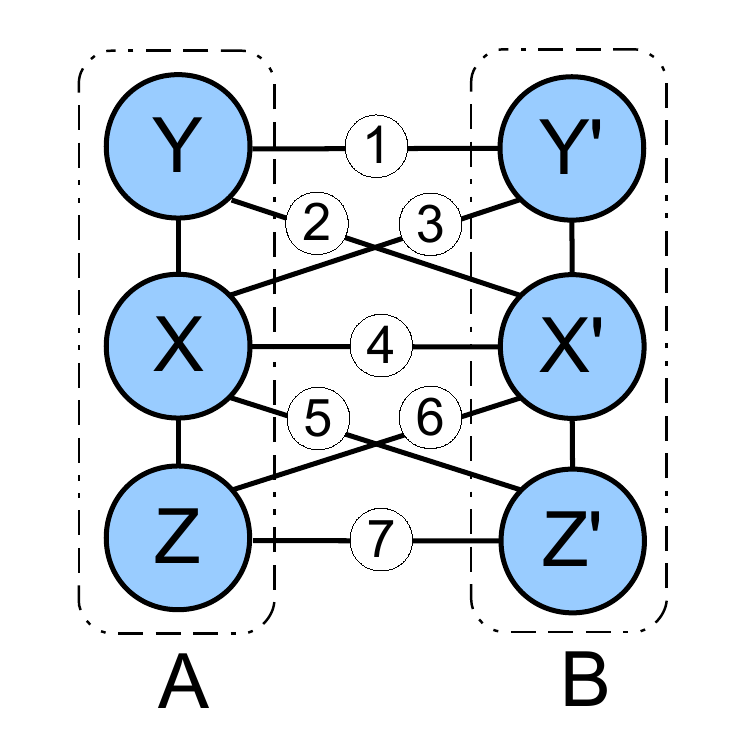}
                \caption{An arbitrary bisection for the augmented graph $\GG$.}
                \label{fig:preliminary1}
        \end{subfigure}
                \qquad
        \begin{subfigure}[b]{0.25\textwidth}
                \includegraphics[scale=.55]
                {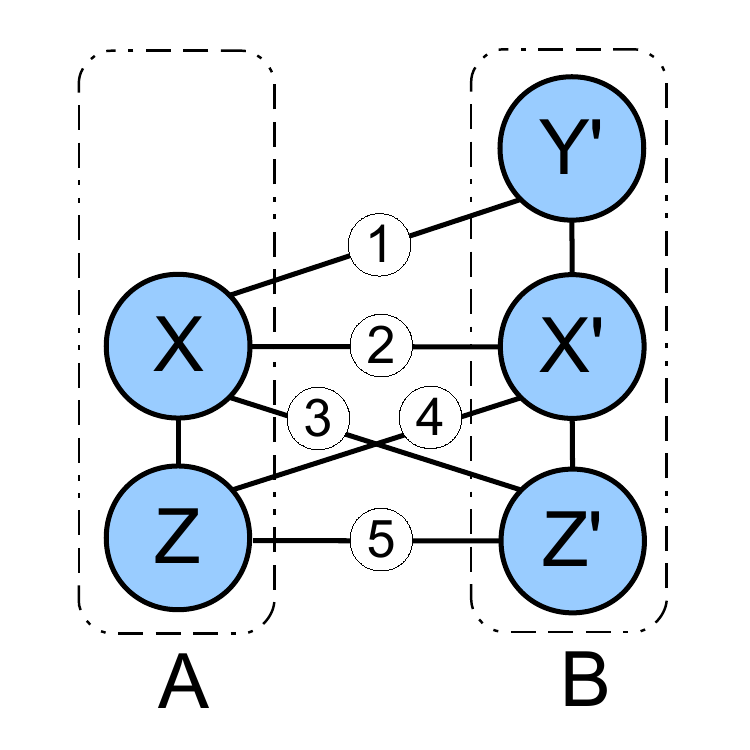}
                \caption{An improved bisection for the augmented graph $\GG$ based on the arbitrary bisection~\ref{fig:preliminary1}.}
                \label{fig:preliminary2}
        \end{subfigure}
        \caption{For the statement of Lemma~\ref{lem:preliminary}. }
        \label{fig:preliminaries}
\end{figure}
%
%
\begin{lemma}
\label{lem:NP-hardness-of-bisection}
$\minbisect(2^{\bm{*}})$ is an NP-hard problem.
\end{lemma}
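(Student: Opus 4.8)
The plan is to reduce $\minbisect$ on the class of \emph{all} graphs with an even number of vertices to $\minbisect(2^{\bm{*}})$, using the augmented-graph construction $\GG$ introduced just before Lemma~\ref{lem:preliminary}. Given an arbitrary graph $G = (V,E)$ with $n$ even, we form $\GG$ by adjoining two disjoint copies $H, I$ of the complete graph $K_q$ (where $q = (n/2)+r$ and $r$ is the least even integer making $n+r$ a power of $2$) and connecting every vertex of $G$ to every vertex of $H\cup I$. As already observed, $\GG$ has $2^{p+1}$ vertices, so $\GG\in\graphs{2^{\bm{*}}}$, and the construction is clearly polynomial-time. It remains to show that a minimum bisection of $\GG$ reveals a minimum bisection of $G$.

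By Lemma~\ref{lem:preliminary}, every minimum bisection $\Set{A,B}$ of $\GG$ is of type $\bigl(V(H),V(I)\bigr)$, meaning (up to swapping $A$ and $B$) that $V(H)\subseteq A$ and $V(I)\subseteq B$. Since $\size{A} = \size{B} = n+r$ and $\size{V(H)} = \size{V(I)} = (n/2)+r$, exactly $n/2$ vertices of $V(G)$ lie in $A$ and the other $n/2$ lie in $B$; so $\Set{A\cap V(G),\, B\cap V(G)}$ is a bisection of $G$. Now I would count the bridges of $\GG$ across such a bisection. Write $X = A\cap V(G)$ and $X' = B\cap V(G)$. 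The cut $\bridges{}{A,B}$ decomposes into: (i) the $G$-internal bridges $\bridges{G}{X,X'}$; (ii) the edges from $X'$ to $V(H)$, of which there are exactly $\size{X'}\cdot q$; (iii) the edges from $X$ to $V(I)$, exactly $\size{X}\cdot q$; and (iv) no edges between $H$ and $I$, since there are none. Because $\size{X}+\size{X'} = n$, the contributions (ii)+(iii) sum to the constant $n\cdot q$, independent of how $V(G)$ is split. Hence $\size{\bridges{\GG}{A,B}} = \size{\bridges{G}{X,X'}} + nq$, so minimizing the $\GG$-cut over type-$(V(H),V(I))$ bisections is exactly minimizing the $G$-cut over bisections of $G$. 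Combined with Lemma~\ref{lem:preliminary}, any minimum bisection of $\GG$ yields, by restriction to $V(G)$, a minimum bisection of $G$, and the optimal cut values satisfy $\minbisect(\GG) = \minbisect(G) + nq$. This gives the Karp reduction from $\minbisect$ (NP-hard by~\cite{garey1976some}) to $\minbisect(2^{\bm{*}})$, so the latter is NP-hard.

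The main obstacle is entirely contained in Lemma~\ref{lem:preliminary}, which has already been proved: without the guarantee that optimal bisections of $\GG$ separate $H$ from $I$, the restriction to $V(G)$ need not be balanced, and the clean additive relationship between the cut values would fail. The remaining steps — verifying $\GG\in\graphs{2^{\bm{*}}}$, polynomial-time computability, and the constant-offset bridge count — are routine. One small point to handle carefully is the degenerate case $n$ already a power of $2$ (then $r = 0$, $q = n/2$), and the trivial base cases of very small $n$, but these cause no difficulty since $q\geqslant 1$ whenever $n\geqslant 2$, so $H$ and $I$ are nonempty and the type-$(V(H),V(I))$ argument still applies.
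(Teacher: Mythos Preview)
Your proposal is correct and follows essentially the same approach as the paper: reduce $\minbisect$ for an arbitrary $G$ to $\minbisect(2^{\bm{*}})$ for the augmented graph $\GG$, invoking Lemma~\ref{lem:preliminary} to guarantee that any minimum bisection of $\GG$ separates $H$ from $I$ and hence restricts to a bisection of $G$. The paper's own proof is terser and does not spell out the constant-offset formula $\size{\bridges{\GG}{A,B}} = \size{\bridges{G}{X,X'}} + nq$; your explicit bridge count makes the reduction more transparent but is otherwise the same argument.
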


\begin{proof}
We use the same notation as in the proof of
Lemma~\ref{lem:preliminary}.  Given that subgraphs $H$ and $I$ of
$\GG$ have an equal number $q = (n/2)+r$ of vertices,
Lemma~\ref{lem:preliminary} implies we can reduce, in polynomial time,
$\minbisect$ for an arbitrary graph $G$ to $\minbisect(2^{\bm{*}})$ for
graph $\GG$. Hence, a minimum bisection for $\GG$ induces a minimum
bisection for the given $G$.  Hence, the NP-hardness of $\minbisect$
in general implies the NP-hardness of $\minbisect(2^{\bm{*}})$.
\end{proof}

\begin{lemma}
\label{lem:alpha-optimal-balanced}
Let $G = (V,E)$ be a graph in the class $\graphs{2^{\bm{*}}}$ and let
$\GG$ be the augmented graph of $G$ as in Lemma~\ref{lem:preliminary}.
Let $(\GG,\B)$ be a balanced reassembling of $\GG$ where $\B$
is a binary tree over $V(\GG)$ (see Definition~\ref{defn:binaryTrees}).
Let $A$ and $B$ be the two children nodes of the root node $V(\GG)$ in $\B$.

\medskip
\noindent
\textbf{Conclusion:}
If $(\GG,\B)$ is an $\alpha$-optimal balanced reassembling, then
$\Set{A,B}$ is a minimum bisection of $\GG$.
\end{lemma}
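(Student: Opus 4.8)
The plan is to pin down $\alpha(\GG,\B)$ by squeezing it, from both sides, against the minimum bisection value of $\GG$. Write $\mu$ for that value, i.e.\ the least $\size{\bridges{\GG}{A',B'}}$ over all bisections $\{A',B'\}$ of $\GG$ (Definition~\ref{def:min-bisection}). The first step is an easy lower bound resting on the rigidity of balance for $\GG$: since $\size{V(\GG)}=2n$ is a power of $2$, say $2^{p+1}$, a balanced reassembling has $\height{}{\B}=\lceil\log 2n\rceil=p+1$, and a binary tree on $2^{p+1}$ leaves whose height is $p+1$ must have every leaf at depth $p+1$ (a binary tree of height $h$ has at most $2^{h}$ leaves, with equality only for the complete tree), so every node at depth $k$ has exactly $2^{p+1-k}$ elements. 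In particular the root's children $A$ and $B$ each have $2^{p}=n$ elements, so $\{A,B\}$ is a bisection of $\GG$ for \emph{every} balanced reassembling, and hence $\alpha(\GG,\B)\geq\degr{\GG,\B}{A}=\size{\bridges{}{A,B}}\geq\mu$.

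The core of the proof is the matching upper bound: I would exhibit one balanced reassembling $\B^{*}$ with $\alpha(\GG,\B^{*})=\mu$. Pick a minimum bisection $\{A^{*},B^{*}\}$ of $\GG$; by Lemma~\ref{lem:preliminary} it is of type $\bigl(V(H),V(I)\bigr)$, so after renaming $V(H)\subseteq A^{*}$ and $V(I)\subseteq B^{*}$. Put $X^{*}:=A^{*}\cap V(G)$ and $Y^{*}:=B^{*}\cap V(G)$; then $\size{X^{*}}=\size{Y^{*}}=n/2$, and because every vertex of $G$ is joined to all of $V(H)\cup V(I)$ while there are no edges between $H$ and $I$, we get $\mu=\size{\bridges{}{A^{*},B^{*}}}=\size{\bridges{G}{X^{*},Y^{*}}}+n^{2}/2$, so in particular $\mu\geq n^{2}/2+1$ (as $G$ is connected). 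Build $\B^{*}$ by making the root's children $A^{*}$ and $B^{*}$; making $A^{*}$'s children $V(H)$ and $X^{*}$, with an arbitrary complete binary tree over each (both have $n/2=2^{p-1}$ elements); and symmetrically under $B^{*}$ with $V(I)$ and $Y^{*}$. Then $\B^{*}$ has height $1+p=p+1$, so it is balanced.

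Next I would verify that no node of $\B^{*}$ has edge-boundary degree above $\mu$; this is a routine finite check, since only a few node shapes occur. The nodes $A^{*},B^{*},X^{*},Y^{*}$ each have degree exactly $\mu$ (a direct count using that $G$-vertices see all of $V(H)\cup V(I)$); $V(H)$ and $V(I)$ have degree $(n/2)\cdot n=n^{2}/2\leq\mu$; any proper subcluster of $V(H)$ or $V(I)$ has degree $<n^{2}/2$; any proper subcluster $W$ of $X^{*}$ or $Y^{*}$ necessarily has $\size{W}\leq n/4$ and so degree at most $\size{W}\,n+\size{\bridges{G}{W}}\leq\size{W}\bigl(2n-\size{W}\bigr)\leq 7n^{2}/16<n^{2}/2$; each leaf $\{v\}$ has degree $\deg_{\GG}(v)\leq 2n-1\leq n^{2}/2+1\leq\mu$ (using $(n-2)^{2}\geq 0$); and the root has degree $0$. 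Since $\mu\geq n^{2}/2$, none of these exceeds $\mu$, and $A^{*}$ attains it, whence $\alpha(\GG,\B^{*})=\mu$.

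Combining the two bounds finishes the proof: if $(\GG,\B)$ is $\alpha$-optimal among balanced reassemblings, then $\alpha(\GG,\B)\leq\alpha(\GG,\B^{*})=\mu$, while the first step gives $\mu\leq\size{\bridges{}{A,B}}\leq\alpha(\GG,\B)$; hence $\size{\bridges{}{A,B}}=\mu$, i.e.\ $\{A,B\}$ is a minimum bisection of $\GG$. The step I expect to require the most care is the upper-bound construction; the two design choices that keep every node's boundary at or below $\mu$ are (i) hanging $V(H)$ directly under $A^{*}$, so the only node mixing $H$-vertices with $G$-vertices is $A^{*}$ itself, whose boundary equals $\mu$, and (ii) exploiting that any node lying strictly below $X^{*}$ in $\B^{*}$ has at most $n/4$ vertices, which holds its boundary safely below $n^{2}/2\leq\mu$.
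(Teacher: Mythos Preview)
Your argument is correct and reaches the same conclusion as the paper, but the route is genuinely different. The paper does a case split on the \emph{given} $\alpha$-optimal reassembling $(\GG,\B)$: in Case~1, where $\{A,B\}$ already has type $\bigl(V(H),V(I)\bigr)$, it shows that for \emph{every} balanced subtree structure on $A$ and $B$ (so nodes may freely mix $G$-vertices and $H$-vertices), each node $S\subseteq A$ satisfies $\degr{\GG,\B}{S}\leq n^{2}/2+C$, whence $\alpha(\GG,\B)=\size{\bridges{}{A,B}}$; Case~2 then rules out the other type via Lemma~\ref{lem:preliminary}. You instead construct a single witness $\B^{*}$ whose grandchildren are precisely $V(H),X^{*},V(I),Y^{*}$, so that no internal node ever mixes $H$- and $G$-vertices; this makes each degree bound a one-line count (your cases $V(H)$, $X^{*}$, their proper subclusters, and leaves). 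What the paper's approach buys is a stronger intermediate fact---it characterizes $\alpha$ for \emph{all} balanced reassemblings with a type-$\bigl(V(H),V(I)\bigr)$ top split---at the cost of bounding $\degr{}{S}$ for arbitrary mixed clusters $S=S_{1}\uplus S_{2}$ with $S_{1}\subseteq X$, $S_{2}\subseteq V(H)$. What your approach buys is economy: you prove exactly the existence statement needed, and your two structural choices (hanging $V(H)$ intact under $A^{*}$; forcing proper subclusters of $X^{*}$ to have size $\leq n/4$) are what make the verification short. Both invoke Lemma~\ref{lem:preliminary}, but at different points: you use it up front to pin down the shape of a minimum bisection before building $\B^{*}$, whereas the paper uses it only to dispose of Case~2.
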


\begin{proof}
Because $G\in\graphs{2^{\bm{*}}}$, each of the complete graphs
$H$ and $I$ has $(n/2)$ vertices, \ie, in contrast to the proof
of Lemma~\ref{lem:preliminary}, here $r = 0$.
The augmented graph $\GG$ is also in the class  $\graphs{2^{\bm{*}}}$,
with $\size{V(G)} = n$ and $\size{V(\GG)} = 2n$.
In the given balanced reassembling $(\GG,\B)$, we have
that $\Set{A,B}$ is a bisection of $\GG$, with $\size{A} = \size{B} = n$.
The subtrees ${\B}_A$ and ${\B}_B$ rooted at $A$ and $B$ are each over
$n$ vertices. By Definitions~\ref{defn:graph-reassembling}
and~\ref{def:different-measures}, we have
$\alpha(\GG,\B) \geqslant \size{\bridges{}{A,B}}$.
The conclusion of the lemma will follow from the
fact that $\alpha(\GG,\B) = \size{\bridges{}{A,B}}$, which we
show next.

In the rest of the proof we use the notation and definitions in
the proof of Lemma~\ref{lem:preliminary}. There are two cases,
depending on whether $\Set{A,B}$ is, or is not,
of type $\bigl(V(H),V(I)\bigr)$.

\paragraph{Case 1.}
If $\Set{A,B}$ is of type $\bigl(V(H),V(I)\bigr)$, we can assume
that $V(H) \subseteq A$ and $V(I) \subseteq V(I)$. We
pose:
\[
   X = A\cap V(G), \quad X' = B\cap V(G),
   \quad Y = A\cap V(H), \quad \text{and}\ \ Z = B\cap V(I) .
\]
By construction, $\size{X} = \size{X'} = (n/2)$ and
$\size{Y} = \size{Z} = (n/2)$. Let $C = \size{\bridges{}{X,X'}}$, which
is the value of the bisection of the given graph $G$.
Because $H$ and $I$ are copies of the complete graph $K_{(n/2)}$,
the set of edges connecting $Y$ to $X'$, and
the set of edges connecting $Z$ to $X$, satisfy the equalities:
\[
    \size{\bridges{}{Y,X'}} = \size{\bridges{}{Z,X}} = (n/2)^2 = n^2/4 .
\]
Hence, using the notation of Definitions~\ref{defn:graph-reassembling}
and~\ref{def:different-measures}, we have:
\[
   \size{\bridges{}{A,B}} =
   \degr{\GG,\B}{A} = \degr{\GG,\B}{B} = (n^2/4) + (n^2/4) + C = (n^2/2) + C .
\]
Hence, $\alpha(\GG,\B) \geqslant (n^2/2) + C$. The equality in fact holds
because, as argued next, $\degr{\GG,\B}{S} \leqslant (n^2/2) + C$
for every node/cluster of vertices $S$ in the subtrees ${\B}_A$
and ${\B}_B$.

Let $S$ be a node in ${\B}_A$. (The same argument applies if $S$ is a node
in ${\B}_B$.) If $S = A$, we already know that
$\degr{\GG,\B}{S} \leqslant (n^2/2) + C$. Suppose $S$ is not the root of
${\B}_A$, \ie, $S\neq A$. This implies $\size{S} \leqslant \size{A}/2$.
Let $\size{S} = k$, so that $k\leqslant (n/2)$.
Let $S_1 = S\cap X$ and $S_2 = S\cap Y$, with $\size{S_1} = k_1$
and $\size{S_2} = k_2$, so that also $k_1+k_2 = k \leqslant (n/2)$.
Note that $\bridges{}{S_1,X'} \subseteq \bridges{}{X,X'}$,
so that if $C_1 = \size{\bridges{}{S_1,X'}}$, then $C_1 \leqslant C$.
Also, $\bridges{}{S_2,X'} = \bridges{}{S_2,Z} = \varnothing$.
We conclude:
\[
   \degr{\GG,\B}{S}\ =
   \ \underbrace{k_1 (n/2) + C_1}_{\size{\bridges{}{S_1,B}}}
    + \underbrace{k_1k_2}_{\size{\bridges{}{S_1,S_2}}}
    + \underbrace{k_2 (n/2)}_{\size{\bridges{}{S_2,X'}}}
   \ =\ (k_1+k_2)(n/2) + k_1k_2 + C_1
   \ \leqslant\ (n^2/2) + C .
\]
\paragraph{Case 2.}
Suppose $\Set{A,B}$ is not of type $\bigl(V(H),V(I)\bigr)$.
By Lemma~\ref{lem:preliminary}, $\Set{A,B}$ is not a minimum
bisection and therefore $\size{\bridges{}{A,B}} > (n^2/2) + C$
where $C$ is defined as in Case 1 above. Hence
$\alpha(\GG,\B)  > (n^2/2) + C$. By Case 1,
$(\GG,\B)$ is not an $\alpha$-optimal balanced reassembling.
\end{proof}

\begin{theorem}
\label{thm:alpha-optimization}
For the class of simple undirected graphs $G$, the computation of
$\alpha$-optimal balanced reassemblings $(G,\B)$ is an NP-hard problem.
\end{theorem}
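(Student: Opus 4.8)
The plan is to chain together the machinery already developed in Lemmas~\ref{lem:preliminary}--\ref{lem:alpha-optimal-balanced} into a single polynomial-time reduction from $\minbisect$ (NP-hard in general) to the problem of computing an $\alpha$-optimal balanced reassembling. First I would take an arbitrary instance $G=(V,E)$ of $\minbisect$ and, exactly as in the paragraph preceding Lemma~\ref{lem:preliminary}, pass to the augmented graph $\GG$ (adding two disjoint cliques $H,I$ each on $q=(n/2)+r$ vertices and joining every vertex of $G$ to every vertex of $H\cup I$). By construction $\GG\in\graphs{2^{\bm{*}}}$ and $\size{V(\GG)}=2^{p+1}$, so balanced reassembling trees over $V(\GG)$ exist; this step is the first reduction, $\minbisect\to\minbisect(2^{\bm{*}})$, already certified by Lemma~\ref{lem:NP-hardness-of-bisection}. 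Note that this $\GG$ itself lies in $\graphs{2^{\bm{*}}}$, so I can now invoke the $r=0$ analysis of Lemma~\ref{lem:alpha-optimal-balanced} with $\GG$ playing the role of its input graph: more precisely, I apply Lemma~\ref{lem:alpha-optimal-balanced} to the graph $G':=\GG$ and form its own augmented graph $\GG':=\GG\uplus H'\uplus I'$ with $\size{V(\GG')}=2\size{V(\GG)}$.

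The core of the argument is then: compute an $\alpha$-optimal balanced reassembling $(\GG',\B)$. By Lemma~\ref{lem:alpha-optimal-balanced}, the two children $\Set{A,B}$ of the root of $\B$ form a minimum bisection of $\GG'$, and by Lemma~\ref{lem:preliminary} (applied to $\GG'$) this minimum bisection must be of type $\bigl(V(H'),V(I')\bigr)$; consequently $A=V(H')\uplus X$ and $B=V(I')\uplus X'$ where $\Set{X,X'}$ is a minimum bisection of $\GG$. Iterating once more, the minimum bisection $\Set{X,X'}$ of $\GG$ is itself of type $\bigl(V(H),V(I)\bigr)$ by Lemma~\ref{lem:preliminary}, so reading off the $V(G)$-parts of $X$ and $X'$ yields a minimum bisection of the original $G$. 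Thus from a single call to an $\alpha$-optimal balanced reassembling oracle on $\GG'$ I recover, in polynomial time, a minimum bisection of $G$. Since $G\mapsto\GG\mapsto\GG'$ blows the vertex count up by only a constant factor and the post-processing is trivial, the whole reduction is polynomial-time, and NP-hardness of $\minbisect$ transfers to $\alpha$-optimal balanced reassembling.

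I should be slightly careful about the exact form of the reduction I want to present: rather than the double augmentation sketched above, the cleanest route is to observe that the single augmentation in Lemma~\ref{lem:preliminary} already produces a graph in $\graphs{2^{\bm{*}}}$, and then apply Lemma~\ref{lem:alpha-optimal-balanced} with that graph as its ``$G$'' --- i.e., Lemma~\ref{lem:alpha-optimal-balanced} is stated for $G\in\graphs{2^{\bm{*}}}$ precisely so it can be fed $\GG$. One then augments $\GG$ once more to get the graph on which the reassembling oracle is invoked. The only genuine subtlety --- and the place I expect to have to be most careful --- is bookkeeping the nested applications of Lemma~\ref{lem:preliminary}: each augmentation layer must be confirmed to satisfy the size hypotheses ($\size{H}=\size{I}$, the clique sizes forcing every bisection to split the two cliques across the cut, and the power-of-two vertex count), and the ``type'' conclusion must be peeled off one layer at a time so that what remains on the innermost layer is exactly a minimum bisection of the original $G$. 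All of this is routine given the lemmas; no new combinatorial idea is needed beyond assembling them in the right order.
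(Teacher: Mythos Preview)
Your proposal is correct and follows the same chain of lemmas as the paper, but you are doing one more augmentation than necessary. The paper's proof simply starts from $\minbisect(2^{\bm{*}})$, whose NP-hardness is already secured by Lemma~\ref{lem:NP-hardness-of-bisection}: given $G\in\graphs{2^{\bm{*}}}$, form the single augmented graph $\GG$, call the reassembling oracle on $\GG$, and read off the root's two children as a minimum bisection of $\GG$ via Lemma~\ref{lem:alpha-optimal-balanced}; one application of Lemma~\ref{lem:preliminary} then recovers a minimum bisection of $G$. Your double augmentation $G\mapsto\GG\mapsto\GG'$ is not wrong --- it just inlines the proof of Lemma~\ref{lem:NP-hardness-of-bisection} into the reduction rather than invoking that lemma directly, so your ``cleanest route'' still carries an avoidable layer of peeling.
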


\begin{proof}
If a deterministic polynomial-time algorithm $\A$ existed for computing
an $\alpha$-optimal balanced reassembling for an arbitrary
graph $G$, then $\A$ could be used for computing an $\alpha$-optimal
balanced reassembling for the augmented graph $\GG$ of
an arbitrary graph $G\in\graphs{2^{\bm{*}}}$. Hence,
by Lemma~\ref{lem:alpha-optimal-balanced},
$\A$ could also be used for computing a minimum bisection of $\GG$
in deterministic polynomial time. This would contradict the NP-hardness of
$\minbisect(2^{\bm{*}})$, as asserted by Lemma~\ref{lem:NP-hardness-of-bisection}.
The desired conclusion follows.
\end{proof}

\section{$\beta$-Optimization of Balanced Reassembling Is NP-Hard}
\label{sect:beta-optimization}

We need to introduce two variations of the $\CliqueCover{k}$ problem
(Definition~\ref{def:clique-cover}).

\begin{definition}{Fixed-Size $\CliqueCover{4}$, Equal-Size $\CliqueCover{4}$}
\label{def:fixed-size-clique-cover}
In the \emph{Fixed-Size $\CliqueCover{4}$ problem} we consider a graph
$G$ together with four positive integers $\Set{n_1,n_2,n_3,n_4}$ such
that $n_1+n_2+n_3+n_4 = \size{V(G)}$ and we ask: Can we partition
$V(G)$ into four disjoint subsets $A_1$, $A_2$, $A_3$ and $A_4$ of
respective sizes $n_1, n_2, n_3$ and $n_4$ such that each of the
induced subgraphs $G[A_1]$, $G[A_2]$, $G[A_3]$ and $G[A_4]$ is a
complete graph (\ie, $A_1$, $A_2$, $A_3$ and $A_4$ are cliques)?

In the \emph{Equal-Size $\CliqueCover{4}$ problem} we consider a graph
$G$ and ask: Can we partition
$V(G)$ into four disjoint subsets $A_1$, $A_2$, $A_3$ and $A_4$ of
equal size, \ie, $\size{A_1}=\size{A_2}=\size{A_3}=\size{A_4}=\size{V(G)}/4$,
such that each of $G[A_1]$, $G[A_2]$, $G[A_3]$ and $G[A_4]$ is a
complete graph (\ie, $A_1$, $A_2$, $A_3$ and $A_4$ are cliques)?
\end{definition}

\begin{lemma}
\label{lem:fixed-size-4-clique-cover}
Fixed-Size $\CliqueCover{4}$ is NP-complete.
\end{lemma}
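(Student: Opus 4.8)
The plan is to prove NP-completeness of Fixed-Size $\CliqueCover{4}$ by reduction from the ordinary $\CliqueCover{3}$ problem (Definition~\ref{def:clique-cover}), which is NP-complete by~\cite{karp1972reducibility}. Membership in NP is immediate: given four candidate subsets $A_1,A_2,A_3,A_4$, one checks in polynomial time that they partition $V(G)$, that $\size{A_i}=n_i$ for each $i$, and that each $G[A_i]$ is complete. So the work is the hardness direction.

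First I would set up the reduction. Given an instance $G=(V,E)$ of $\CliqueCover{3}$ with $\size{V}=n$, I want to produce a graph $G'$ together with four sizes $\Set{n_1,n_2,n_3,n_4}$ so that $G'$ has a fixed-size $4$-clique cover with those sizes iff $G$ has a $3$-clique cover. The natural idea is to let the fourth clique be a ``dummy'' clique that absorbs nothing essential: take $G' = G \uplus K_s$ for a fresh clique $K_s$ on $s$ new vertices disjoint from and non-adjacent to $G$, and set $n_4 = s$, while $n_1,n_2,n_3$ must be chosen to sum to $n$. The subtlety is that a $3$-clique cover of $G$ need not have three equal or pre-specified block sizes, whereas Fixed-Size $\CliqueCover{4}$ demands exact sizes $n_1,n_2,n_3$. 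I would handle this by iterating over all $\bigO{n^2}$ choices of $(n_1,n_2,n_3)$ with $n_1+n_2+n_3=n$: $G$ has a $3$-clique cover iff for \emph{some} such triple the Fixed-Size $\CliqueCover{4}$ instance $(G\uplus K_{n_4},\,n_1,n_2,n_3,n_4)$ is a yes-instance (with $n_4$ any fixed positive value, say $n_4=1$, using a single new vertex). Since this is a disjunction over polynomially many polynomial-size instances, it is a polynomial-time (Turing, or after padding, many-one) reduction, which suffices for NP-hardness.

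The key steps, in order, are: (1) show Fixed-Size $\CliqueCover{4} \in \text{NP}$; (2) describe the construction $G \mapsto (G\uplus K_{n_4},\,n_1,n_2,n_3,n_4)$ for each admissible triple; (3) prove the forward direction — if $G$ has a $3$-clique cover $\Set{V_1,V_2,V_3}$, then picking the triple $(\size{V_1},\size{V_2},\size{V_3})$ and letting $A_i = V_i$ for $i\le 3$ and $A_4 = V(K_{n_4})$ gives a valid fixed-size $4$-clique cover of $G'$, because $A_4$ is a clique and is a separate connected component so it causes no conflict; (4) prove the backward direction — if $(G\uplus K_{n_4},\,n_1,n_2,n_3,n_4)$ has a fixed-size $4$-clique cover $\Set{A_1,A_2,A_3,A_4}$, argue that (after possibly relabeling) the new vertices are confined appropriately and the $A_i\cap V(G)$ form a clique cover of $G$ of size at most $3$; (5) conclude that running the Fixed-Size oracle on all $\bigO{n^2}$ triples decides $\CliqueCover{3}$, establishing NP-hardness.

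The main obstacle is step (4), the backward direction, specifically controlling how the $n_4$ dummy vertices interact with the partition: if $n_4=1$ a single new isolated-from-$G$ vertex $u$ might legitimately be placed in a block together with vertices of $G$ only if it is adjacent to all of them, which it is not, so $u$ must form a singleton clique — forcing $A_4=\Set{u}$ up to relabeling, hence $n_4=1$ is the clean choice. One must also rule out the degenerate possibility that the block containing $u$ has size $n_i$ for some $i\le 3$ and thereby shifts sizes around; this is why insisting on exact sizes with $n_4=1$ and $n_1+n_2+n_3=n$ pins down $A_4=\Set{u}$ and forces $\Set{A_1\cap V(G),A_2\cap V(G),A_3\cap V(G)} = \Set{A_1,A_2,A_3}$ to be a genuine $3$-clique cover of $G$ with the prescribed sizes. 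Once that bookkeeping is done carefully, the equivalence is clear and the reduction goes through.
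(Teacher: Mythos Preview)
Your proof is correct, and the core idea matches the paper's: both use a Turing reduction that enumerates polynomially many size profiles and queries a Fixed-Size $\CliqueCover{4}$ oracle on each. The difference is in the source problem and the gadgetry. The paper reduces directly from $\CliqueCover{4}$ (which it has already noted is NP-complete for every $k\geqslant 3$): given $G$ on $n$ vertices, it simply runs the oracle on $(G,n_1,n_2,n_3,n_4)$ for every one of the $p(n,4)=\bigOO{n^3}$ partitions of $n$ into four positive parts, with no graph modification at all. You instead reduce from $\CliqueCover{3}$, adjoin a single isolated vertex to force a fourth block, and enumerate the $\bigOO{n^2}$ triples $(n_1,n_2,n_3)$. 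Your route buys a direct appeal to Karp's original result and a slightly smaller enumeration; the paper's route buys a cleaner argument with no gadget and no backward-direction case analysis about where the dummy vertex lands. One small wording issue in your step~(4): insisting on $n_4=1$ does not literally ``pin down $A_4=\Set{u}$'' when some $n_i=1$ as well, but as you yourself note, after relabeling (or simply by observing that the three non-$\Set{u}$ blocks always partition $V(G)$ into cliques) the conclusion goes through unchanged.
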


\begin{proof}
Given a $4$-part partition $\Set{A_1,A_2,A_3,A_4}$ of $V(G)$, 
we can verify in polynomial time that the induced graphs
$\Set{G[A_1],G[A_2],G[A_3],G[A_4]}$ are each complete and that their
sizes are the given $\Set{n_1,n_2,n_3,n_4}$. So the problem is in NP.

We next show that
NP-completeness follows by reduction from $\CliqueCover{4}$,
\ie, the existence of a deterministic polynomial-time algorithm $\A$ for 
Fixed-Size $\CliqueCover{4}$ would imply the existence of a
deterministic polynomial-time algorithm for $\CliqueCover{4}$. The input
for the hypothetical $\A$ consists of a graph $G$ together with four positive
integers $\Set{n_1,n_2,n_3,n_4}$. For the desired 
reduction, we use the function $p(n,4)$, a cubic polynomial
in $n$, which counts the number of ways of partitioning positive 
integer $n$ into $4$ positive integers~\cite{andrews1998}:
\[
      p(n,4)\ :=
      \ \begin{cases}
        \Bigl[ \dfrac{(n+1)^3}{144} - \dfrac{(n+1)}{48} \Bigr]
         \qquad & \text{if $n$ is even},
        \\[1.9ex]
        \Bigl[ \dfrac{(n+1)^3}{144} - \dfrac{(n+1)}{12} \Bigr]
         \qquad & \text{if $n$ is odd},
        \end{cases}
\]
where $[x]$ is the nearest integer to $x$. We leave to the reader the 
straightforward task of writing an algorithm $\A'$ to generate all partitions 
of $n$ into $4$ positive integers, which runs in $\bigOO{n^3}$ time.
To decide whether an arbitrarily given graph $G$ is a positive instance of 
$\CliqueCover{4}$, we run algorithm $\A'$ to generate the successive $4$-part 
partitions of $n = \size{V(G)}$. The given $G$ has a $4$-clique
cover iff algorithm $\A$ returns ``yes'' when its input is: $G$
together with at least one of these $4$-part partitions of integer $n$.
\end{proof}

\begin{lemma}
\label{lem:equal-size-4-clique-cover}
Equal-Size $\CliqueCover{4}$ is NP-complete.
\end{lemma}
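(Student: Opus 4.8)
The goal is to reduce Fixed-Size $\CliqueCover{4}$, already shown NP-complete in Lemma~\ref{lem:fixed-size-4-clique-cover}, to Equal-Size $\CliqueCover{4}$. Membership in NP is immediate: given a partition $\Set{A_1,A_2,A_3,A_4}$ of $V(G)$, verify in polynomial time that each $G[A_i]$ is complete and that $\size{A_1}=\size{A_2}=\size{A_3}=\size{A_4}=\size{V(G)}/4$. The substance is the reduction. Given an instance of Fixed-Size $\CliqueCover{4}$, namely a graph $G$ together with positive integers $\Set{n_1,n_2,n_3,n_4}$ with $n_1+n_2+n_3+n_4 = \size{V(G)} = n$, I would construct a graph $G'$ by "padding" $G$ with four disjoint cliques, one attached to (or rather, designed to absorb into) each of the four intended parts, so that the required equal-size constraint in $G'$ forces exactly the prescribed sizes $n_i$ on the part of the partition coming from $G$.

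**The construction.** Let $N = \max\Set{n_1,n_2,n_3,n_4}$ and pad each part up to size $N$: introduce four new disjoint vertex sets $P_1,P_2,P_3,P_4$ with $\size{P_i} = N - n_i$, each inducing a complete graph, and add all edges between $P_i$ and the vertices of $G$ — no, that is too coarse, since in a clique cover the part containing $P_i$ must itself be a clique, so I must instead arrange that the only way a clique in $G'$ can contain a $P_i$-vertex is together with vertices from $A_i$. The clean way is to make the $P_i$'s pairwise non-adjacent and non-adjacent to all of $V(G)$ except... — in fact the simplest version that works: let $G'$ have vertex set $V(G)\uplus P_1\uplus\cdots\uplus P_4$ where $P_i$ induces a clique of size $N-n_i$, the $P_i$'s are pairwise disjoint and mutually non-adjacent, and there are no edges between any $P_i$ and $V(G)$. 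Then any $4$-clique cover of $G'$ into equal parts of size $N$ must place each connected padding clique $P_i$ entirely inside one part (a clique cannot be split across parts unless... it can, but then you waste a part on a strict subset — here is where the sizing forces it). One must check that with four parts and four "large" cliques $P_i$, each part gets exactly one $P_i$; the remaining $N-\size{P_i} = n_i$ vertices of that part come from $V(G)$ and must induce a clique in $G$. Conversely a Fixed-Size solution $\Set{A_1,A_2,A_3,A_4}$ of $G$ yields the equal-size clique cover $\Set{A_i\cup P_i}$ of $G'$. This is all polynomial-time computable since $N \le n$.

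**The main obstacle.** The delicate point is proving that in *every* equal-size $4$-clique cover of $G'$, the four padding cliques land in four distinct parts — i.e., ruling out degenerate solutions where, say, two $P_i$'s merge (impossible, since they are mutually non-adjacent) or where a $P_i$ is split and scattered so that some part contains no padding vertices at all. I would argue by a counting/pigeonhole step: the total number of padding vertices is $4N - n$, and each part has size exactly $N$; if some part contained $t$ padding vertices all from distinct $P_i$'s, the mutual non-adjacency of distinct $P_j$'s forces $t\le 1$, so each part contains vertices from at most one padding set, hence at most... — then a matching/counting argument (four parts, and $\sum_i \size{P_i} = 4N - n$ padding vertices that cannot all fit under the "at most one source per part, size $\le N$" constraint unless distributed one $P_i$ per part) pins down the structure. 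One has to handle the edge case where some $n_i = N$, so $\size{P_i}=0$; a harmless fix is to take $N := \bigl(\max_i n_i\bigr) + 1$ so every $P_i$ is nonempty, or to add a uniform offset to all four sizes, which does not change feasibility. I expect this pigeonhole argument to be the only real work; the forward direction and the NP-membership are routine.
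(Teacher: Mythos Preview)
Your reduction has a genuine error in the construction of $G'$. You settle on padding cliques $P_i$ that are \emph{completely disconnected} from $V(G)$ (``there are no edges between any $P_i$ and $V(G)$''). With this choice, the forward direction you state --- ``a Fixed-Size solution $\Set{A_1,A_2,A_3,A_4}$ of $G$ yields the equal-size clique cover $\Set{A_i\cup P_i}$ of $G'$'' --- is false: $A_i\cup P_i$ is \emph{not} a clique in $G'$, since no vertex of $A_i\subseteq V(G)$ is adjacent to any vertex of $P_i$. In fact the construction collapses entirely: any clique in your $G'$ lies wholly inside $V(G)$ or wholly inside a single $P_j$, and since $\size{P_j}=N-n_j<N$, every clique of size $N$ must lie in $V(G)$; but four disjoint such cliques would need $4N>\size{V(G)}$ vertices whenever the $n_i$ are not all equal. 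So $G'$ is a NO instance of Equal-Size $\CliqueCover{4}$ irrespective of $G$, and the reduction carries no information.

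The irony is that the variant you considered first and discarded as ``too coarse'' --- adding \emph{all} edges between each $P_i$ and $V(G)$ --- is exactly the right move, and is precisely what the paper does (with the cosmetic difference that the paper pads to size $n$, taking $\size{P_i}=n-n_i$, rather than to $N=\max_i n_i$). Your worry that one must force $P_i$ to combine only with the ``correct'' $A_i$ is unfounded: we do not know $A_i$ at construction time, and we do not need to. What pins down the structure is only that distinct $P_i$'s are mutually non-adjacent, so each block of the cover meets at most one $P_i$; a pigeonhole count (four nonempty $P_i$'s, four blocks) then gives exactly one $P_i$ per block, and the leftover $N-\size{P_i}=n_i$ vertices of that block, all drawn from $V(G)$, must form a clique in $G$. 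Your ``main obstacle'' paragraph already contains this counting argument; it just needs to be paired with the construction that makes $A_i\cup P_i$ an actual clique.
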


\begin{proof}
If $\Set{A_1,A_2,A_3,A_4}$ is a $4$-part partition of $V(G)$, where
$\size{A_1}=\size{A_2}=\size{A_3}=\size{A_4} = \size{V(G)}/4 = n/4 $, 
we can verify in polynomial time that the induced graphs
$\Set{G[A_1],G[A_2],G[A_3],G[A_4]}$ are each complete. 
So the problem is in NP.

NP-completeness follows by reduction from Fixed-Size $\CliqueCover{4}$
to Equal-Size $\CliqueCover{4}$, \ie, the existence of a deterministic
polynomial-time algorithm $\A$ for Equal-Size $\CliqueCover{4}$ would 
imply the existence of a deterministic 
polynomial-time algorithm for Fixed-Size $\CliqueCover{4}$, as shown
next. Given an arbitrarily given graph $G$ and four positive integers
$\Set{n_1,n_2,n_3,n_4}$ such that $n_1+n_2+n_3+n_4 = \size{V(G)} = n$,
we introduce $4$ new sets of vertices $\Set{A_1,A_2,A_3,A_4}$ such that:
\[
  \size{A_1} = n - n_1, \quad \size{A_2} = n - n_2,
  \quad \size{A_3} = n - n_3,  \quad \size{A_4} = n - n_4.
\]
We construct a new graph $G'$ such that:
\begin{alignat*}{4}
    &  V(G')\ && :=\ && V(G) \cup A_1 \cup A_2 \cup A_3 \cup A_4 ,
\\[1.5ex]     
    &  E(G')\ && :=\ && 
      E(G) \cup \Set{\,\set{v\,w}\;|
           \; v\in V(G)\text{ and } w\in A_1\cup A_2\cup A_3\cup A_4\,} 
\\
   & && && \cup \Set{\,\set{v\,w}\;|\;v,w\in A_1\,}
      \cup \Set{\,\set{v\,w}\;|\;v,w\in A_2\,}
      \cup \Set{\,\set{v\,w}\;|\;v,w\in A_3\,}
      \cup \Set{\,\set{v\,w}\;|\;v,w\in A_4\,} .
\end{alignat*}
In words, the new $G'$ is obtained from $G$ by adding four cliques, one
clique on each of the new vertex sets in $\Set{A_1,A_2,A_3,A_4}$, and by
connecting every vertex in $A_1\cup A_2\cup A_3\cup A_4$ with every
vertex in $V(G)$.  Hence, $G'$ has $4n$ vertices, and there
are no edges between the subgraphs
$\Set{G[A_1],G[A_2],G[A_3],G[A_4]}$.

Using the fact that no clique in $G'$ can contain vertices from two
distinct sets in $\Set{A_1,A_2,A_3,A_4}$, we conclude that $G$ is a
positive instance of Fixed-Size $\CliqueCover{4}$ with sizes
$\Set{n_1,n_2,n_3,n_4}$ iff $G'$ is a positive instance of Equal-Size
$\CliqueCover{4}$.
\end{proof}

For the rest of this section we restrict attention to graphs $G$ in the
class $\graphs{2^{\bm{*}}}$, introduced before Definition~\ref{def:min-bisection}.
A balanced reassembling $(G,\B)$ of such a graph $G$ is relative to
a full binary tree $\B$ of height $\log n$ (\ie,
with $1+\log n$ levels) where $n = \size{V(G)}$.

The nodes in a reassembling tree $\B$ are each a cluster of vertices
(Definition~\ref{defn:binaryTrees}), a subset of $V(G)$. One
of the measures on a node/cluster $X$ in the reassembling $(G,\B)$ is
its height (Definition~\ref{defn:graph-reassembling}), denoted
$\height{G,\B}{X}$. 
We can extend the measure $\heightSym$ to every
edge $\set{v\, w}\in E(G)$, by defining: 
\[
   \height{G,\B}{\set{v\, w}}\ :=\ \min\,\SET{\,\height{G,\B}{X}\;\bigl|
     \; X\in\B \text{ and both } v, w\in X\,} .
\]
In words, the height of $\set{v\, w}$ in $(G,\B)$ is the
height of the least node/cluster $X$ that includes both endpoints of
edge $\set{v\, w}$. Note that, if $\B$ is a full binary tree (the 
case of a balanced reassembling of $G\in\graphs{2^{\bm{*}}}$),
then the node/cluster $X$ is at the same distance (or height)
from the endpoints (or leaf nodes) $v$ and $w$.

Another way of understanding $\height{G,\B}{\set{v\, w}} = p$, where 
$0\leqslant p\leqslant \log n$, is that $p$ is the
level number in $\B$ (starting from the bottom, with level $0$ being the
level of all leaf nodes) at which the two halves of edge $\set{v\, w}$ are
spliced together, or at which the edge $\set{v\, w}$ is re-introduced
in the reassembling.

\begin{lemma}
\label{lem:different-formula-for-beta-measure}
If $G\in\graphs{2^{\bm{*}}}$ and $(G,\B)$ is a balanced reassembling of $G$,
then:
\[
   \beta(G,\B)\ =
  \ 2\times \sum\,\Set{\,\height{G,\B}{\set{v\, w}}\;|\;\set{v\, w}\in E(G)\,}.
\]
Informally, $\beta(G,\B)$ is minimized (resp. maximized) when
edges are spliced as soon as possible (as late as possible) in the
reassembling.
\end{lemma}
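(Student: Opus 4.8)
The plan is to prove the identity by a double-counting argument that reassigns each unit of edge-boundary degree to a pair (edge, ancestor cluster). The key observation is that $\beta(G,\B)$ sums $\degr{G,\B}{X} = \size{\bridges{G}{X}}$ over all $(2n-1)$ nodes $X\in\B$ (Proposition~\ref{prop:propertiesTwo}), and each such summand counts the edges $\set{v\,w}\in E(G)$ that have exactly one endpoint in $X$. So I would rewrite
\[
   \beta(G,\B)\ =\ \sum_{X\in\B}\ \size{\Set{\,\set{v\,w}\in E(G)\;|\;\size{X\cap\Set{v,w}} = 1\,}}
   \ =\ \sum_{\set{v\,w}\in E(G)}\ \size{\Set{\,X\in\B\;|\;\size{X\cap\Set{v,w}} = 1\,}},
\]
by exchanging the order of summation. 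The remaining task is to show that for each edge $\set{v\,w}$, the number of clusters $X$ that separate $v$ from $w$ (contain exactly one of them) equals $2\cdot\height{G,\B}{\set{v\,w}}$.

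To evaluate that inner count, I would fix an edge $\set{v\,w}$ and let $Z$ be the least cluster containing both $v$ and $w$; by definition $\height{G,\B}{Z} = \height{G,\B}{\set{v\,w}} =: p$. Using part~2 of Proposition~\ref{prop:propertiesTwo} (the subtree $\B_Z$ is a binary tree over $Z$ with root $Z$) and the fact that $\B$ is a full balanced tree over a power-of-two vertex set, the subtree $\B_Z$ has height exactly $p$, and the leaf $\Set{v}$ sits at depth $p$ below $Z$, as does $\Set{w}$. A cluster $X\in\B$ contains exactly one of $v,w$ iff $X$ lies strictly below $Z$ on exactly one of the two root-to-leaf paths from $Z$ to $\Set{v}$ and from $Z$ to $\Set{w}$ (Proposition~\ref{prop:propertiesOne}, parts~1--2): indeed any cluster containing $v$ is on the path from $\Set{v}$ to $V(G)$, any cluster containing both is an ancestor of $Z$ (hence contains both, not ``exactly one''), and the clusters on the $v$-path that are proper subsets of $Z$ are precisely $X_0 = \Set{v},X_1,\ldots,X_{p-1}$ — that is $p$ of them — none of which contains $w$. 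The same count of $p$ applies on the $w$-side, and the two families are disjoint, giving $2p$ separating clusters in total. Finally, since $G\in\graphs{2^{\bm*}}$ and $\B$ is balanced, every leaf is at depth $\log n$, which makes $\B_Z$ a \emph{full} binary tree of height $p$ and guarantees both $\Set{v}$ and $\Set{w}$ really are at depth exactly $p$ below $Z$ — this is where the hypothesis $G\in\graphs{2^{\bm*}}$ is used. Summing $2\,\height{G,\B}{\set{v\,w}}$ over all edges yields the claimed formula.

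The main obstacle I anticipate is the bookkeeping in the middle step: carefully justifying that ``$X$ contains exactly one endpoint of $\set{v\,w}$'' is equivalent to ``$X$ is a proper-subset-of-$Z$ cluster on exactly one of the two leaf-paths,'' and that there are exactly $p$ such clusters on each path. This requires combining parts~1--2 of Proposition~\ref{prop:propertiesOne} (uniqueness of the path from a leaf, and that every cluster containing $v$ lies on it) with part~1 of Proposition~\ref{prop:propertiesTwo} (the comparability of intersecting clusters, used to see that a cluster meeting both $\Set{v}$ and $\Set{w}$ nontrivially must contain $Z$ or be contained in it) and the balancedness of $\B$ to pin down the depth as exactly $p$ rather than merely at most $p$. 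The informal remark about minimization/maximization then follows immediately: $\beta(G,\B)$ is a positive-weighted sum of edge heights over balanced trees, and the heights range over $\Set{1,\ldots,\log n}$, so $\beta$ is small exactly when edges are re-introduced at low levels.
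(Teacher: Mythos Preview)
Your double-counting argument is correct and complete. The exchange of summation is exactly right, and your identification of the clusters separating $v$ from $w$ as the $p$ proper subclusters of $Z$ on the $v$-path together with the $p$ on the $w$-path is justified by the properties you cite (Propositions~\ref{prop:propertiesOne} and~\ref{prop:propertiesTwo}); the hypothesis $G\in\graphs{2^{\bm{*}}}$ with $\B$ balanced is invoked at precisely the right spot, to force both leaves to sit at depth exactly $p$ below $Z$ rather than merely at depth $\leqslant p$.

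The paper, by contrast, does not actually write out a proof: it declares the lemma a ``straightforward consequence'' of the definitions and remarks that a formal argument ``can be carried out by induction on $p\geqslant 1$, where $\size{V(G)} = n = 2^p$,'' with all details omitted. So your route---global double counting over pairs (edge, separating cluster)---is genuinely different from the inductive scheme the paper hints at, in which one would presumably split $\B$ at the root into two balanced subtrees over $2^{p-1}$ vertices each, apply the induction hypothesis to each half, and then account for the edges crossing the top bisection. Your approach has the advantage of being non-recursive and of making the factor~$2$ visible as the two leaf-to-$Z$ paths; the inductive approach would have the (minor) advantage of not needing to reason explicitly about least common ancestors. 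Either way the content is elementary, which is why the paper felt free to omit it.
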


\begin{proof}
Straightforward consequence of Definitions~\ref{defn:graph-reassembling}
and~\ref{def:different-measures}. A formal proof can be carried
out by induction on $p\geqslant 1$, where $\size{V(G)} = n = 2^p$.
All details omitted.
\end{proof}

The proof of the next lemma is interesting in that it combines both
algebraic reasoning (formulation of an instance of
integer quadratic programming) and combinatorial reasoning (existence 
of a partition of $V(G)$ into four independent vertex sets of equal size).

\begin{lemma}
\label{lem:four-disjoint-independent-sets}
Let $G\in\graphs{2^{\bm{*}}}$ with $\size{V(G)} = n = 2^p$ for some
$p\geqslant 2$. Let $A_1,A_2,A_3,A_4\subseteq V(G)$ be four disjoint
independent sets of vertices in $G$, each of size $n/4$.
Let $(G,\B)$ be a balanced reassembling and 
$X_1,X_2,X_3,X_4\in\B$ be the 
nodes/vertex-clusters in the tree $\B$ such that 
$\size{X_1} =\size{X_2} =\size{X_3} =\size{X_4} = n/4$.%
   \footnote{
 Stated differently, $\Set{X_1,X_2,X_3,X_4}$ are the four nodes of $\B$ 
 such that:
 \[
  \height{G,\B}{X_1}=\height{G,\B}{X_2}=\height{G,\B}{X_3}
  =\height{G,\B}{X_4}= p-2,
 \]
 where $p = \log n$ with $n = \size{V(G)}$, 
 \ie, $\Set{X_1,X_2,X_3,X_4}$ are the four grandchildren of the
 root node $V(G)$.
   }

\medskip\noindent
\textbf{Conclusion:} 
If the $\beta$-measure $\beta(G,\B)$ is maximized, \ie,
\[
  \beta(G,\B)\ =\ \max\,\Set{\,\beta(G,{\B}')\;|
  \; {\B}'\text{ is a balanced reassembling tree }\,},
\]
then $\Set{X_1,X_2,X_3,X_4}$ are disjoint independent sets in $G$,
not necessarily the same as $\Set{A_1,A_2,A_3,A_4}$, 
each of size $n/4$.
\end{lemma}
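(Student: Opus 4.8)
The plan is to work with the reformulation in Lemma~\ref{lem:different-formula-for-beta-measure}, $\tfrac12\beta(G,\B)=\sum_{e\in E(G)}\height{G,\B}{e}$, and to rewrite it level by level. Write $\PP_i$ for the partition of $V(G)$ into the $2^{p-i}$ clusters of $\B$ at level $i$, and $\cut{\PP_i}$ for the number of edges joining two distinct blocks of $\PP_i$. Summing the identity $\height{G,\B}{e}=\sum_{j\geq 1}[\,\height{G,\B}{e}\geq j\,]$ over $e\in E(G)$ gives $\tfrac12\beta(G,\B)=\sum_{i=0}^{p-1}\cut{\PP_i}$. The first observation is that $\cut{\PP_{p-2}}=m$ precisely when the four grandchildren $X_1,X_2,X_3,X_4$ of the root are independent; in that case every $\PP_i$ with $i\leq p-2$ refines $\PP_{p-2}$ and hence also cuts all $m$ edges, so that $\tfrac12\beta(G,\B)=(p-1)m+\cut{\PP_{p-1}}$. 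Placing the hypothesized independent sets $A_1,\dots,A_4$ as the four grandchildren and pairing them into the two children of the root so as to cut as many edges as possible, this already shows $\max_{\B}\tfrac12\beta(G,\B)\ \geq\ (p-1)m+\hat c$, where $\hat c$ denotes the largest cut of a balanced bipartition of $G$ both of whose sides are themselves a union of two independent sets of size $n/4$.

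Next I would bound $\tfrac12\beta(G,\B)$ from above for an arbitrary balanced reassembling $(G,\B)$. Split $E(G)$ into the $a$ edges re-introduced at the root (height $p$), the $b$ edges re-introduced at level $p-1$, and the $s:=\sum_{j}\size{E(G[X_j])}$ edges lying inside a single grandchild (height at most $p-2$); here $a+b+s=m$, and $a=\cut{\PP_{p-1}}$ is a bisection cut of $G$, so $a\leq c_{\max}$, the maximum-bisection value. Since each interior edge contributes at most $p-2$,
\[
   \tfrac12\beta(G,\B)\ \leq\ p\,a+(p-1)\,b+(p-2)\,s\ =\ (p-1)m+a-s .
\]
Hence if $(G,\B)$ is $\beta$-maximizing but its grandchildren are not all independent, then $s\geq 1$, and combining the two bounds, $(p-1)m+\hat c\ \leq\ \tfrac12\beta(G,\B)\ \leq\ (p-1)m+c_{\max}-1$, which forces $\hat c<c_{\max}$.

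The remaining, and main, step is to rule this out by showing $\hat c=c_{\max}$ whenever $G$ admits a partition into four independent sets of size $n/4$ — equivalently, that some maximum bisection of $G$ can be refined into four independent blocks of size $n/4$. This is where the algebraic/combinatorial argument enters: encode a candidate bisection-with-refinement by the $4\times4$ nonnegative integer overlap matrix $T=(t_{jk})$ with $t_{jk}=\size{X_j\cap A_k}$ and all row and column sums equal to $n/4$ (a scaled doubly-stochastic matrix), express both the bisection cut and the number of within-block (``monochromatic'') edges as quadratic forms in the entries of $T$, and argue that, starting from a maximum bisection, one can always perform an integer move inside the polytope of such matrices that separates two currently adjacent vertices sharing a block without decreasing the cut while strictly decreasing the number of within-block edges; iterating reaches a maximum bisection that refines into a balanced $4$-colouring. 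Verifying that a \emph{cut-preserving} exchange is always available — not merely a cut-non-increasing one — is the delicate point, and I expect it to be the principal obstacle. Once it is settled, a $\beta$-maximizing balanced reassembling cannot have a non-independent grandchild of the root, which is exactly the asserted conclusion.
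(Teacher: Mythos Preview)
Your setup coincides with the paper's: both routes (yours via the level-by-level identity $\tfrac12\beta=\sum_i \cut{\PP_i}$, the paper's via an integer quadratic program) arrive at the same relaxed objective $\theta'/2=a-s$ and hence at the bound $\tfrac12\beta(G,\B)\le(p-1)m+a-s$. The crux in both cases is then to argue that the $\beta$-maximizer must have $s=0$.

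Where the two part ways is in how this crux is handled. The paper's final paragraph simply asserts that maximizing $\theta'$ ``means'' taking a maximum-cut bisection \emph{and} taking $\theta_2=0$, observing that $\theta_2=0$ is attainable because the $A_i$ exist; it never argues that the two conditions are \emph{simultaneously} attainable, i.e., it never proves your $\hat c=c_{\max}$ and does not attempt anything like your exchange argument. You, more scrupulously, isolate $\hat c=c_{\max}$ as the statement that would close the argument and sketch an exchange on the overlap matrix $T$ to establish it. That step is the genuine gap: your ``cut-preserving exchange'' is not carried out, and as you yourself flag, it is not clear such a move is always available (a max bisection need not have both sides bipartite with an equal bipartition). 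So the hole in your proposal sits exactly where the paper's own proof is hand-wavy; you have located the weak point precisely, but neither argument actually closes it.
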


\begin{proof}
Let $V(G) = \Set{v_1,\ldots,v_n}$. Assume $(X_1\uplus X_2)$ and $(X_3\uplus X_4)$
are the two children-nodes/vertex-clusters of the root $V(G)$ in $\B$. The
height in $(G,\B)$ of every vertex cluster in $\Set{X_1,X_2,X_3,X_4}$ is $(p-2)$,
while the height of both $(X_1\uplus X_2)$ and $(X_3\uplus X_4)$ is $(p-1)$,
and the height of the root $V(G)$ is of course $p$.

Lemma~\ref{lem:different-formula-for-beta-measure} gives an alternative
definition of $\beta(G,\B)$, obtained by summing the heights in $(G,\B)$
of all the edges. This definition is presumed in the formulation 
of the integer quadratic programming below.  The problem of finding a 
balanced reassembling tree $\B$ which 
maximizes $\beta(G,\B)$ can be translated to an integer
quadratic programming, as follows:
\begin{equation*}
\begin{array}{ll@{}ll}
 \text{maximize} 
    & \text{(i)}\quad 
    & \displaystyle{2p \times \sum_{\set{v_i\,v_{j}}\ \in\ E(G)} 
    (x_{i,1}x_{j,3}\ +\ x_{i,1}x_{j,4}\ +\ x_{i,2}x_{j,3}\ +\ x_{i,2}x_{j,4})}
    & + 
\\[2.5ex]
    & \text{(ii)}\quad 
    & \displaystyle{2(p-1) \times \sum_{\set{v_i\,v_{j}}\ \in\ E(G)} 
    (x_{i,1}x_{j,2}\ +\ x_{i,3}x_{j,4} )} & + 
\\[2.5ex]
    & \text{(iii)}\quad 
    & \displaystyle{ 2(p-2) \times \sum_{\set{v_i\,v_{j}}\ \in\ E(G)} 
    (x_{i,1}x_{j,1}\ +\ x_{i,2}x_{j,2}\ +\ x_{i,3}x_{j,3}\ +\ x_{i,4}x_{j,4}) }
    &
\end{array}
\end{equation*}
\begin{equation*}
\begin{array}{ll@{}ll}
 \text{subject to} 
  &\text{(i)}\quad 
  & \text{for all $1\leqslant i\leqslant n$ and
              $1 \leqslant k \leqslant 4$,} \quad
  & \quad x_{i,k} \in \Set{0,1}
\\[2.5ex]
  &\text{(ii)}\quad 
  &\text{for all $1 \leqslant k \leqslant 4$,} \quad
  &\quad  \displaystyle{\sum_{1 \leqslant i \leqslant n} x_{i,k} = \dfrac{n}{4}}
\\[2.5ex]
  &\text{(iii)}\quad 
  & \text{for all $1 \leqslant i \leqslant n$,} \quad
  &\quad  \displaystyle{\sum_{1 \leqslant k \leqslant 4} x_{i,k} = 1}
\end{array}
\end{equation*}
The optimization objective is quadratic and has three parts with
respective coefficients $2p$, $2(p-1)$, and $2(p-2)$, 
while the constraints are linear. Every vertex $v_i \in V(G)$ 
corresponds to four variables 
$\Set{x_{i,1},x_{i,2},x_{i,3},x_{i,4}}$, which indicate which set in 
$\Set{X_1,X_2,X_3,X_4}$ contains vertex $v_i$. Specifically, 
for every $1\leqslant i\leqslant n$ and $1\leqslant k \leqslant 4$:
\[
 x_{i,k}\ = \ \begin{cases}
           1   \qquad    & \text{if $v_i \in X_k$,} 
           \\
           0       & \text{if $v_i \notin X_k$.}
           \end{cases}
\]
To understand the preceding formulation as a $(0,1)$ quadratic
programming, observe that for every edge $\set{v_i\,v_{j}}$:
\[
    x_{i,k}\;x_{j,\ell}\ =
    \ \begin{cases}
      1 \qquad & \text{if $v_i\in X_k$ and $v_{j}\in X_{\ell}$,}
      \\
      0 \qquad & \text{if $v_i\not\in X_k$ or $v_{j}\not\in X_{\ell}$.}
      \end{cases}
\]
Note that the suggested quadratic system is slightly relaxed in the sense that
(as represented by part (iii) with coefficient $2 (p-2)$ 
of the optimization objective) we assume that 
every edge $\set{v_i\,v_{j}}$ whose endpoints are in the same $X_k$,
\ie, $v_i, v_{j} \in X_k$ for some $1 \leqslant k \leqslant 4$, 
contributes the maximum possible value, here $2(p-2)$, rather than
its exact value to $\beta(G,\B)$. The rest of the proof shows that 
this relaxation does not affect its correctness.

A straightforward re-ordering of terms shows that the optimization
objective can be written as follows:
\begin{alignat*}{5}
   & \theta\ &&:=\ && 
     2p \times \Bigl(\ \sum_{\set{v_i\,v_j}\;\in\;E(G),\ 1 \leqslant k\leqslant\ell\leqslant 4}
         {x_{i,k}\;x_{j,\ell}}\ \Bigr) - 2 \theta_1 - 4 \theta_2
     \qquad &&\text{where}
\\[2ex]
   & \theta_1\ &&:=\ && 
     \sum_{\set{v_i\,v_{j}}\;\in\; E(G)} (x_{i,1}\;x_{j,2}\ +\ x_{i,3}\;x_{j,4} )
     \qquad &&\text{and}
\\[2ex]
   & \theta_2\ &&:=\ && 
     \sum_{\set{v_i\,v_{j}}\;\in\; E(G),\ 1 \leqslant k\leqslant 4} x_{i,k}\;x_{j,k}  \ .
\end{alignat*}
The quantity $\theta_1$ counts the number of edges $\set{v_i\,v_{j}}$
satisfying one of two conditions:
\begin{itemize}[itemsep=0pt,parsep=3pt,topsep=3pt,partopsep=0pt]
\item either the endpoints $v_i$ and $v_j$ are in $X_1$ and $X_2$,
\item or the endpoints $v_i$ and $v_j$ are in $X_3$ and $X_4$.
\end{itemize}
Every edge satisfying one of the two preceding conditions has height
$(p-1)$ in $(G,\B)$.  The quantity $\theta_2$ counts the number of
edges $\set{v_i\,v_{j}}$ whose endpoints $v_i$ and $v_j$ are in the
same $X_k$, and whose height is therefore $\leqslant (p-2)$ in $(G,\B)$.
We now observe that:
\[
  \sum_{\set{v_i\,v_j}\;\in\;E(G),\ 1 \leqslant k\leqslant\ell\leqslant 4}
         {x_{i,k}\;x_{j,\ell}}\ =\ \size{E(G)}\ =\ m .
\]
The optimization objective can now be simplified to read:
\begin{alignat*}{5}
    & \theta\ && =\ && 2pm\ -\ 2\theta_1\ -\ 4\theta_2
      \ =\ 2(p-1)m\ +\ 2(m-\theta_1-\theta_2)\ -\ 2\theta_2 
      \ =\ 2(p-1)m\ + \theta' \quad && \text{where}
\\[1ex]
   & \theta'\ && :=\ && 2(m-\theta_1-\theta_2)\ -\ 2\theta_2 \ .
\end{alignat*}
Hence, maximizing $\theta$ is equivalent to maximizing $\theta'$, and the
latter is maximized when $(m-\theta_1-\theta_2)$ is maximized and $\theta_2$
is minimized. But $(m-\theta_1-\theta_2)$ is the number of edges whose
height in $(G,\B)$ is $p$, \ie, the edges $\set{v_i\,v_j}$ such that
$v_i\in (X_1\uplus X_2)$ and $v_j\in (X_3\uplus X_4)$, while 
$\theta_2$ is the number of edges $\set{v_i\,v_{j}}$ whose endpoints 
$v_i$ and $v_j$ are in the same $X_k$ and whose height is $\leqslant (p-2)$.

We switch to combinatorial reasoning, by invoking the fact that $G$
has four disjoint independent sets, each with $(n/4)$ vertices. There is
no need to explicitly solve the integer quadratic programming above.
Maximizing $\theta'$ means choosing $(X_1 \uplus X_2, X_3 \uplus X_4)$
as a bisection with a maximum cut 
$\size{\bridges{}{X_1 \uplus X_2, X_3 \uplus X_4}}$, \ie, choosing 
$(X_1 \uplus X_2)$ and $(X_3 \uplus X_4)$ as independent sets. And
minimizing $\theta_2$, which is here possible down to $\theta_2 = 0$,
means choosing each of $X_1$, $X_2$, $X_3$, and $X_4$, as an independent set.
\end{proof}

\begin{lemma}
\label{lem:reduction-from-equal-size-4-clique-cover}
Let $G\in\graphs{2^{\bm{*}}}$ be
a positive instance of Equal-Size $\CliqueCover{4}$, 
with $\size{V(G)} = n = 2^p$ for some $p\geqslant 2$.
Let $(G,\B)$ be a balanced reassembling and 
$\Set{X_1,X_2,X_3,X_4}\subseteq\B$ be the 
nodes/vertex-clusters in the tree $\B$ such that 
$\size{X_1} =\size{X_2} =\size{X_3} =\size{X_4} = n/4$.

\medskip\noindent
\textbf{Conclusion:} 
If $(G,\B)$ is a $\beta$-optimal balanced reassembling, then
$\Set{X_1,X_2,X_3,X_4}$ is an \\ Equal-Size $\CliqueCover{4}$ of $G$.
\end{lemma}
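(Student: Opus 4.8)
The plan is to deduce the statement from Lemma~\ref{lem:four-disjoint-independent-sets} by passing to the complement graph $\overline{G}$ (same vertex set $V(G)$, with $\set{v\,w}\in E(\overline{G})$ iff $v\neq w$ and $\set{v\,w}\notin E(G)$). The point of complementing is that cliques of $G$ become independent sets of $\overline{G}$, while $\beta$-\emph{minimization} for $G$ becomes $\beta$-\emph{maximization} for $\overline{G}$, which is exactly what Lemma~\ref{lem:four-disjoint-independent-sets} handles.

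First I would record that, for any two distinct vertices $v,w$ of $V(G)$ and any reassembling tree $\B$, the quantity $\min\,\Set{\,\height{\B}{X}\;|\;X\in\B\text{ and }v,w\in X\,}$ depends only on $\B$ and not on which graph the pair belongs to; in particular $\height{G,\B}{\set{v\,w}}=\height{\overline{G},\B}{\set{v\,w}}$ for every such pair. Next I would show that, for fixed $n=\size{V(G)}=2^p$, the sum $K_n := \sum\,\Set{\,\height{\B}{\set{v\,w}}\;|\;v\neq w\text{ in }V(G)\,}$ is the same for every balanced binary tree $\B$ over $V(G)$. Since a balanced reassembling of a graph in $\graphs{2^{\bm{*}}}$ is relative to a full (perfect) binary tree of height $p$, any two such trees over $V(G)$ are related by a bijection $\theta:V(G)\to V(G)$, which lifts to an isomorphism by Proposition~\ref{prop:propertiesThree} and hence preserves $\height{\B}{\set{v\,w}}$ while merely permuting the set of all pairs; summing over pairs gives the same total. (Equivalently, a direct count gives $K_n=\sum_{\ell=1}^{p}\ell\cdot 2^{\,p+\ell-2}$, since at level $\ell$ there are $2^{\,p-\ell}$ clusters, each contributing $2^{\,2(\ell-1)}$ pairs of height exactly $\ell$.) Because $E(G)$ and $E(\overline{G})$ partition the set of all two-element subsets of $V(G)$, Lemma~\ref{lem:different-formula-for-beta-measure} then yields, for every balanced $\B$,
\[
   \beta(G,\B)\ +\ \beta(\overline{G},\B)\ =\ 2\,K_n .
\]
Hence $(G,\B)$ minimizes $\beta(G,\cdot)$ over balanced trees if and only if $(\overline{G},\B)$ maximizes $\beta(\overline{G},\cdot)$ over balanced trees.

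Now I would invoke the hypothesis that $G$ is a positive instance of Equal-Size $\CliqueCover{4}$: there are four disjoint subsets $A_1,A_2,A_3,A_4\subseteq V(G)$, each of size $n/4$, each inducing a clique in $G$, hence each an independent set of $\overline{G}$. Thus $\overline{G}$ satisfies the hypothesis of Lemma~\ref{lem:four-disjoint-independent-sets} (with $p\geqslant 2$), and by the previous paragraph the $\beta$-optimality of $(G,\B)$ makes $(\overline{G},\B)$ a $\beta$-maximal balanced reassembling of $\overline{G}$; so that lemma applies and tells us that the four nodes $\Set{X_1,X_2,X_3,X_4}$ of $\B$ of size $n/4$ (the grandchildren of the root) are disjoint independent sets of $\overline{G}$, each of size $n/4$. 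Equivalently, each $G[X_i]$ is a clique of $G$; and since $\Set{X_1,X_2,X_3,X_4}$ are precisely the four clusters at level $p-2$ of the full binary tree $\B$, they partition $V(G)$. Therefore $\Set{X_1,X_2,X_3,X_4}$ is an Equal-Size $\CliqueCover{4}$ of $G$, as claimed.

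The main obstacle is establishing the constancy of $K_n$ over all balanced trees and the attendant bookkeeping behind the identity $\beta(G,\B)+\beta(\overline{G},\B)=2K_n$; once that is in hand, the conclusion follows by a direct appeal to Lemma~\ref{lem:four-disjoint-independent-sets}, provided one is careful that ``$\beta$-optimal'' here means minimization, that complementation converts this into the maximization treated there, and that passing to the complement changes neither the set of balanced reassembling trees nor the height of any pair of vertices.
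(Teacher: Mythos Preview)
Your proposal is correct and follows essentially the same approach as the paper: pass to the complement $\overline{G}$, observe that the $\beta$-measure of a balanced reassembling of the complete graph $K_n$ is independent of the tree (the paper calls this constant $M$, which equals your $2K_n$ via Lemma~\ref{lem:different-formula-for-beta-measure}), deduce that $\beta$-minimization for $G$ is $\beta$-maximization for $\overline{G}$, and then apply Lemma~\ref{lem:four-disjoint-independent-sets}. The paper justifies the constancy of $M$ by the same isomorphism argument you give (every bijection on $V(G)$ lifts to an isomorphism of balanced reassemblings of $K_n$), so the two proofs are the same up to notation.
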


\begin{proof} Because $(G,\B)$ is a $\beta$-optimal balanced reassembling
(Definition~\ref{def:different-measures}),
we have:
\[
  \beta (G,\B)\ := \ \min\,
  \SET{\,\beta (G,{\B}')\;\bigl|
  \;{\B}'\text{ is a balanced binary tree over $V(G)$}\,}.
\]
We write $K_n$ for the complete graph over $n = \size{V(G)}$ vertices,
and $\overline{G}$ for the complement of $G$. We thus have
$V(\overline{G}) = V(G)$ and $E(\overline{G}) = E(K_n) - E(G)$.
Clearly, $G = \overline{\overline{G}}$ and we can write 
$G = K_n - \overline{G}$. Hence:
\[
  \beta (G,\B)\ = \ \min\,
  \SET{\,\beta (K_n - \overline{G},{\B}')\;\bigl|
  \;{\B}'\text{ is a balanced binary tree over $V(G)$}\,}.
\]
The $\beta$-measure of a balanced reassembling of $K_n$, call it $M$, does
\emph{not} depend on the reassembling tree, \ie, for all balanced reassembling
trees ${\B}_1$ and ${\B}_2$ on $n$ vertices, it holds that:%
    \footnote{We do not need the exact value of $M$ for this proof,
    it suffices to know it exists, which is an obvious consequence
    of the fact that every bijection from ${\B}_1$ to ${\B}_2$ produces
    an isomorphism (Definition~\ref{defn:isomorphic-graph-reassemblings})
    between the balanced reassemblings $(K_n,{\B}_1)$ and $(K_n,{\B}_2)$.    
    It takes some effort to compute $M$ precisely (omitted here): 
    For all balanced reassembling trees $\B$, if $n = 2^p$,  
    it can be shown that 
    $M = \beta (K_n,\B) = (p-1)\times 2^{2p} + 2^p = ((\log n)-1)\times n^2 + n$.
    }
\[
   \beta (K_n,{\B}_1)\ =\ \beta (K_n,{\B}_2)\ =\ M.
\]
Hence, the following equality holds:
\[
  \beta (G,\B)\ = \ M - \max\,
  \SET{\,\beta (\overline{G},{\B}')\;\bigl|
  \;{\B}'\text{ is a balanced binary tree over $V(G)$}\,}.
\]
In words, the value of $\beta (G,\B)$ is \emph{minimized} when the value
of $\beta (\overline{G},{\B}')$ is \emph{maximized}. Since
$(G,\B)$ is $\beta$-optimal, $\beta (\overline{G},{\B}')$ is \emph{maximized}.

By hypothesis, $G$ is an instance of Equal-Size $\CliqueCover{4}$, 
\ie, there are disjoint sets $\Set{A_1, A_2, A_3, A_4}$ of vertices in $G$ such 
that the induced subgraphs in $\Set{G[A_1],G[A_2],G[A_3],G[A_4]}$ are each
a complete graph with $n/4$ vertices. Hence, the 
corresponding subgraphs in $\overline{G}$, namely 
$\Set{\overline{G}[A_1],\overline{G}[A_2],\overline{G}[A_3],\overline{G}[A_4]}$,
are each an edgeless graph with $n/4$ vertices. Equivalently,
$\Set{A_1, A_2, A_3, A_4}$ are disjoint independent sets 
in $\overline{G}$, each with $n/4$ vertices. Hence, by 
Lemma~\ref{lem:four-disjoint-independent-sets}, $\Set{X_1,X_2,X_3,X_4}$ are 
disjoint independent sets in $\overline{G}$, and therefore disjoint cliques 
in  $G$,each with $n/4$ vertices.
\Hide{
To conclude the proof, it suffices to observe that for a balanced
reassembling $(\overline{G},{\B}')$, the measure
$\beta (\overline{G},{\B}')$ is \emph{maximized} when the four 
grandchildren nodes/clusters, say $\Set{X'_1,X'_2,X'_3,X'_4}$,
of the root $V(G)$ in ${\B}'$ induce subgraphs 
$\Set{\overline{G}[X'_1],\overline{G}[X'_2],
\overline{G}[X'_3],\overline{G}[X'_4]}$ which are each an edgeless graph
with $n/4$ vertices. Indeed, in order to maximize $\beta (\overline{G},{\B}')$,
we have to maximize $\height{\overline{G},{\B}'}{\set{v\, w}}$ for
all $\set{v\,w}\in E(\overline{G})$, 
by Lemma~\ref{lem:different-formula-for-beta-measure}. Hence,
every edge $\set{v\,w}$ must connect $v\in V(\overline{G}[X'_i])$
and $w\in V(\overline{G}[X'_j])$ for some $i\neq j$ and be such that
$\height{\overline{G},{\B}'}{\set{v\, w}} = p$ or $p-1$ where $2^p = n
= \size{V(G)}$. }
\end{proof}

\begin{theorem}
\label{thm:beta-optimization}
For the class of simple undirected graphs $G$, the computation of 
$\beta$-optimal balanced reassemblings $(G,\B)$ is an NP-hard problem.
\end{theorem}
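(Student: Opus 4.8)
The plan is to reuse the proof template of Theorem~\ref{thm:alpha-optimization}: assume a deterministic polynomial-time algorithm $\A$ computes a $\beta$-optimal balanced reassembling of every input graph, and turn $\A$ into a polynomial-time decision procedure for an NP-hard problem --- here, Equal-Size $\CliqueCover{4}$ restricted to the class $\graphs{2^{\bm{*}}}$.

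The first step is to observe that Equal-Size $\CliqueCover{4}$ is already NP-hard when restricted to $\graphs{2^{\bm{*}}}$. This does \emph{not} follow verbatim from Lemma~\ref{lem:equal-size-4-clique-cover}, since the graph produced there has $4n$ vertices, which need not be a power of $2$. The remedy is a cosmetic change to that reduction: starting from a Fixed-Size $\CliqueCover{4}$ instance $\bigl(G,\Set{n_1,n_2,n_3,n_4}\bigr)$ with $\size{V(G)} = n$, pick a power of $2$, say $N$, with $N > n$, attach to $G$ four pairwise non-adjacent cliques $A_1,A_2,A_3,A_4$ of sizes $N-n_1,\,N-n_2,\,N-n_3,\,N-n_4$, and join every vertex of $A_1\cup A_2\cup A_3\cup A_4$ to every vertex of $G$. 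The resulting graph $G''$ has $4N$ vertices, a power of $2$, hence $G''\in\graphs{2^{\bm{*}}}$; and the argument of Lemma~\ref{lem:equal-size-4-clique-cover} goes through unchanged (no clique of $G''$ meets two distinct $A_i$, and each $A_i$ is too large to be split across two blocks of an Equal-Size cover, since two such blocks would lie inside $A_i\cup V(G)$ and force $2N \leqslant (N-n_i)+n < 2N$), so $G$ is a positive instance of Fixed-Size $\CliqueCover{4}$ with sizes $\Set{n_1,n_2,n_3,n_4}$ iff $G''$ is a positive instance of Equal-Size $\CliqueCover{4}$. With Lemma~\ref{lem:fixed-size-4-clique-cover}, this establishes the NP-hardness of Equal-Size $\CliqueCover{4}$ on $\graphs{2^{\bm{*}}}$.

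The second step is the reduction itself. Given the hypothetical $\A$ and an arbitrary $G\in\graphs{2^{\bm{*}}}$ with $\size{V(G)} = n = 2^p$ (and $p\geqslant 2$, the small cases being trivial), run $\A$ on $G$ to obtain a $\beta$-optimal balanced reassembling $(G,\B)$; since $\B$ is then a full binary tree of height $p$, its four grandchildren nodes $\Set{X_1,X_2,X_3,X_4}$ of the root are well defined and each of size $n/4$. Test in polynomial time whether each induced subgraph $G[X_i]$ is complete. If the test succeeds, then $\Set{X_1,X_2,X_3,X_4}$ is by definition an Equal-Size $\CliqueCover{4}$ of $G$, so $G$ is a positive instance. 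Conversely, if $G$ is a positive instance of Equal-Size $\CliqueCover{4}$, Lemma~\ref{lem:reduction-from-equal-size-4-clique-cover} guarantees that the $\beta$-optimal reassembling $(G,\B)$ returned by $\A$ has $\Set{X_1,X_2,X_3,X_4}$ equal to an Equal-Size $\CliqueCover{4}$ of $G$, so the test succeeds. Hence $\A$ decides Equal-Size $\CliqueCover{4}$ on $\graphs{2^{\bm{*}}}$ in polynomial time, contradicting the first step. Therefore no such $\A$ exists, and computing $\beta$-optimal balanced reassemblings is NP-hard --- first for graphs in $\graphs{2^{\bm{*}}}$, and hence for the class of all simple undirected graphs.

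I expect the main obstacle to be precisely the first step: Lemma~\ref{lem:reduction-from-equal-size-4-clique-cover} and Lemma~\ref{lem:four-disjoint-independent-sets} are only available for power-of-$2$ graphs (they rely on the four grandchildren of the root each having exactly $n/4$ vertices), so one cannot simply quote Lemma~\ref{lem:equal-size-4-clique-cover} and must instead verify that the padded construction above both stays inside $\graphs{2^{\bm{*}}}$ and preserves the answer to the clique-cover question. Once that is in place, the remaining bookkeeping --- extracting the grandchildren clusters from $\B$, testing completeness, and chaining the implications to reach a contradiction --- is routine.
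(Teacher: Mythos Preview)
Your proposal is correct and follows the same overall strategy as the paper: assume a polynomial-time $\beta$-optimizer, use Lemma~\ref{lem:reduction-from-equal-size-4-clique-cover} to extract an Equal-Size $\CliqueCover{4}$ from the four grandchildren of the root, and derive a contradiction with the NP-hardness of Equal-Size $\CliqueCover{4}$.

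The one genuine difference is that you are more careful than the paper on exactly the point you flag. The paper's proof simply invokes Lemma~\ref{lem:equal-size-4-clique-cover} directly, without addressing that Lemma~\ref{lem:reduction-from-equal-size-4-clique-cover} only applies to graphs in $\graphs{2^{\bm{*}}}$, whereas the reduction in Lemma~\ref{lem:equal-size-4-clique-cover} outputs a graph on $4n$ vertices, which need not be a power of~$2$. You close this gap by padding the clique sizes to $N-n_i$ for a power-of-$2$ value $N>n$, so that the constructed graph lands in $\graphs{2^{\bm{*}}}$; the correctness argument is unchanged. This is a legitimate refinement of the paper's argument rather than a different route.
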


\begin{proof}
If a deterministic polynomial-time algorithm existed for producing a 
$\beta$-optimal balanced reassembling $(G,\B)$ of an arbitrarily
given $G$, then this algorithm could be
used again to decide in deterministic polynomial-time whether a graph in 
$\graphs{2^{\bm{*}}}$ is a positive instance of Equal-Size $\CliqueCover{4}$,
by Lemma~\ref{lem:reduction-from-equal-size-4-clique-cover}.
This would in turn contradict Lemma~\ref{lem:equal-size-4-clique-cover}
asserting the NP-completeness of Equal-Size $\CliqueCover{4}$. The
desired conclusion follows.
\end{proof}

\section{Related and Future Work}
\label{sect:future}

As mentioned in Section~\ref{sect:intro}, graph reassembling
can be considered as a special case of a family of graph
embedding problems, known as \emph{communication tree embedding} problems.
Communication tree embedding is the problem of
embedding the vertices of a source graph $G$ into the
nodes of a host tree $T$.  In the case where the vertices of $G$ are
mapped into the leaves of the host tree, the underlying tree is called
a \emph{routing tree} (or call routing tree) and the related problems are referred to as
routing tree embedding problems.  Graph reassembling is a
slight variation of a special case of routing tree embedding
where the internal nodes of the host tree have each degree 3, known
as \emph{tree layout problem}.

In the context of communication tree embedding, and more
specifically tree layout problem, different measures are
defined.  Corresponding to our $\alpha$ measure in this report
is the \emph{edge congestion} measure, which represents the maximum
communication traffic on the edges of the host tree $T$.
Seymour and Thomas in ~\cite{seymour1994call} show that minimum congestion
routing tree problem, referred to as the \emph{minimum carving-width}
problem, is solvable in $O(n^4)$ for planar graphs, but is
NP-hard in general.
The efficiency of the method of Seymour and Thomas for planar graphs, was
improved in~\cite{hicks2005planar} and later in~\cite{gu2008optimal}.
While the result of~\cite{seymour1994call} for
minimum carving width of planar graphs can be extended to
$\alpha$-optimal \emph{general} graph reassembling, the status of
$\alpha$-optimal \emph{balanced} graph reassembling for planar graphs
is open, which we also conjecture to be NP-hard in contrast to the case
of general graph reassembling for planar graphs.

Our $\beta$ measure in this report corresponds to another measure,
known as \emph{tree length}, which is equivalent to the summation of edge
congestions in the underlying host tree $T$.  The tree length of a
tree embedding represents the average delay (\ie, average
dilation) in the source graph and equivalently the average traffic on
the edges of the host tree.  In~\cite{mirzaei2016AvgDelay} it is shown
that finding a tree layout with minimum tree length is NP-hard when
the source graph is a general graph with no self loops. In the
same report also the NP-hardness result is extended to the more
general routing tree embedding problem.  There are several other
measures defined for different variations of communication tree
problems; a comprehensive list of these measures can be found
in~\cite{i2001layout, alvarez2007communication, petit:2011}.

Future work related to graph reassembling (and specifically related to
the balanced case) includes a study of classes of graphs for which
$\alpha$-optimization and/or $\beta$-optimization of their
reassembling can be carried out in low-degree polynomial times.  On the other hand,
as indicated in our earlier report~\cite{kfouryReassembling}, the smaller
the $\alpha$ and $\beta$ measures are, the more efficient the
execution of programs is, in a domain-specific language (DSL) for the
design of flow networks~\cite{BestKfoury:dsl11,Kfoury:SCP2014,
SouleBestKfouryLapets:eoolt11,kfoury2013different}. Beinstock in~\cite{bienstock1990embedding}
presents some elementary classes of graphs with small optimal tree congestion as well as an
upper bound for the value of optimal tree congestion based on the
\emph{tree decomposition} of the graphs.
Deciding weather the width of tree decomposition of an arbitrary graph
is at most $k$ is NP-complete~\cite{arnborg1987complexity}, however the problem is tractable
for small and fixed values of $k$. Bodlaender in~\cite{bodlaender1994tourist} surveys a list of graph classes for which the
treewidth can be computed in polynomial time.
Similarly relating tree width to minimum tree congestion of graphs,
\cite{belmonte2012characterizing} characterizes the
class of graphs that have $\alpha$-measure at most $3$,
but extending these result to different variety of graph reassembling such as balanced case and
also study of the characterization of the classes of graphs with higher value of $\alpha$-measure
is open to be investigated in future.

Finally, there is the question of computing approximations of
$\alpha$-optimal and $\beta$-optimal balanced graph reassembling, whereby we
can turn the NP-hardness of any of the preceding optimizations into
polynomially-solvable optimizations. In~\cite{khuller1994designing}
the problem of finding call routing trees with minimum congestion
has been studied and an approximate method for finding a solution within
a $O(\log n)$ factor from the optimal solution is suggested.
Hence the natural question is if similar methods can be facilitated in order to find
approximation of $\alpha$-optimal and $\beta$-optimal for balanced reassembling as
well as other variations of graph reassembling.

\Hide
{\footnotesize
\printbibliography
}

{\footnotesize 
\bibliographystyle{plainurl} 
\bibliography{generic,extra}
}

\ifTR
\else
\fi

\end{document}